\def\BibTeX{{\rm B\kern-.05em{\sc i\kern-.025em b}\kern-.08em
    T\kern-.1667em\lower.7ex\hbox{E}\kern-.125emX}}
\acrodef{C-V2X}{cellular vehicle-to-everything}
\acrodef{V2I}{vehicle-to-infrastructure}
\acrodef{V2V}{vehicle-to-vehicle}
\acrodef{V2P}{vehicle-to-pedestrian}
\acrodef{RB}{resource block}
\acrodef{NR}{new radio}
\acrodef{QoS}{quality-of-service}
\acrodef{SINR}{signal-to-interference-plus-noise ratio}
\acrodef{CSI}{channel state information}
\acrodef{BS}{base station}
\acrodef{TxV}{transmitting vehicle}
\acrodef{RxV}{receiving vehicle}
\acrodef{RSSI}{received signal strength indicator}
\acrodef{RSS}{received signal strength}
\acrodef{PDF}{probability distribution function}
\acrodef{RV}{random variable}
\acrodef{MSE}{mean square error}
\acrodef{HR}{hazard rate}
\acrodef{ITS}{intelligent transportation systems}
\acrodef{PSFCH}{physical sidelink feedback channel}
\acrodef{PUCCH}{physical uplink control channel}
\acrodef{RV}{random variable}
\acrodef{i.i.d.}{independent, identically distributed}
\acrodef{GMM}{Gaussian mixture model}
\acrodef{TVaR}{tail value at risk}
\acrodef{CRF}{conditional relative frequency}
\acrodef{ORF}{overall relative frequency}
\acrodef{RSU}{roadside unit}
\acrodef{HD}{high-definition}
\acrodef{AV}{autonomous vehicle}
\acrodef{HPR}{high probability region}
\acrodef{CDF}{cumulative distribution function}
\acrodef{CCDF}{conditional cumulative distribution function}
\acrodef{AD}{autonomous driving}
\newtheorem{theorem}{Theorem}
\newtheorem{corollary}{Corollary}
\newtheorem{lemma}{Lemma}
\newtheorem{proposition}{Proposition}
\begin{document}

\title{Resilient Vehicular Communications under Imperfect Channel State Information \vspace{-0.2cm}}

\author{\normalsize Tingyu Shui, 
Walid Saad, \IEEEmembership{Fellow, IEEE}, Ye Hu, \IEEEmembership{Member, IEEE}, and Mingzhe Chen, \IEEEmembership{Senior Member, IEEE}\vspace{-0.8cm}
\thanks{T. Shui and W. Saad are with Bradley Department of Electrical and Computer Engineering, Virginia Tech, Alexandria, VA, 22305, USA, Emails: tygrady@vt.edu, walids@vt.edu.}
\thanks{Y. Hu is with the Department of Industrial and Systems Engineering, University of Miami, Coral Gables, FL, 33146, USA, Email: yehu@miami.edu.}
\thanks{M. Chen is with the Department of Electrical and Computer Engineering and Frost Institute for Data Science and Computing, University of Miami, Coral Gables, FL, 33146, USA, Email: mingzhe.chen@miami.edu.}
\thanks{A preliminary version of this work was presented at the IEEE International Conference on Communications \cite{shui2024resilienceperspectivecv2xcommunication}.}}
\maketitle
\begin{abstract}
\Ac{C-V2X} networks provide a promising solution to improve road safety and traffic efficiency. One key challenge in such systems lies in meeting \ac{QoS} requirements of vehicular communication links given limited network resources, particularly under imperfect \ac{CSI} conditions caused by the highly dynamic environment. In this paper, a novel two-phase framework is proposed to instill resilience into \ac{C-V2X} networks under unknown imperfect \ac{CSI}. The resilience of the \ac{C-V2X} network is defined, quantified, and optimized the first time through two principal dimensions: \emph{absorption phase} and \emph{adaptation phase}. Specifically, the \ac{PDF} of the imperfect \ac{CSI} is estimated during the absorption phase through dedicated absorption power scheme and \ac{RB} assignment. The estimated \ac{PDF} is further used to analyze the interplay and reveal the tradeoff between these two phases. Then, a novel metric named \textit{\ac{HR}} is exploited to balance the \ac{C-V2X} network's prioritization on absorption and adaptation. Finally, the estimated \ac{PDF} is exploited in the adaptation phase to recover the network's \ac{QoS} through a real-time power allocation optimization. Simulation results demonstrate the superior capability of the proposed framework in sustaining the \ac{QoS} of the \ac{C-V2X} network under imperfect \ac{CSI}. Specifically, in the adaptation phase, the proposed design reduces the \ac{V2V} delay that exceeds \ac{QoS} requirement by $35 \%$ and $56 \%$, and improves the average \ac{V2I} throughput by $14 \%$ and $16 \%$ compared to the model-based and data-driven benchmarks, respectively, without compromising the network's \ac{QoS} in the absorption phase.

\end{abstract}
\begin{IEEEkeywords}
\Ac{C-V2X}, Resilience, Imperfect \ac{CSI}, Power allocation.
\end{IEEEkeywords}

\acresetall
\vspace{-12pt}
\section{Introduction}
\Ac{C-V2X} networks are a key enabler for seamless inter-vehicular and vehicle-to-infrastructure communication \cite{9345798}.
By leveraging cellular \acp{BS}, \ac{C-V2X} facilitates emerging vehicular applications such as real-time \ac{HD} maps transmission for \ac{AD} \cite{9385411}, timely safety alerts \cite{9847235}, and inter-vehicular coordination in platoon \cite{8778746}. However, meeting the diverse \ac{QoS} requirements of vehicles in resource-constrained \ac{C-V2X} networks is challenging due to the need for accurate real-time \ac{CSI}. Specifically, the short coherence time, induced by dynamic multi-path effect and Doppler shift in vehicular networks, results in a mismatch between the obtained \ac{CSI} estimation and the actual channel conditions, which will ultimately lead to a \ac{QoS} deterioration in \ac{C-V2X} networks \cite{10197226}.

To ensure a desired \ac{QoS} in \ac{C-V2X} networks, accurate \ac{CSI} is essential for effective resource allocation. However, in realistic \ac{C-V2X} networks, the actual channel state may have changed by the time resource allocation based on estimated \ac{CSI} is applied, i.e., the imperfect \ac{CSI} problem. This mismatch between estimated \ac{CSI} and real-time channel, induced by multipath effects and Doppler shifts, is highly dependent on the wireless environment and the kinetic states of vehicles (e.g., velocity). Thus, \emph{model-based approaches} that assume a predefined error distribution or bounded error range, as commonly studied in the literature, are impractical. As a promising solution, \emph{data-driven approaches} require no prior knowledge of \ac{CSI} imperfection. However, most existing methods focus solely on achieving reliable network performance after the data-driven process, while neglecting the transient \ac{QoS} degradation during the process. This limitation is particularly critical in \ac{C-V2X} networks, where even slight \ac{QoS} degradation can lead to severe consequences \cite{9734746}.

To this end, there is a need for a framework that instills \emph{resilience} into the \ac{C-V2X} network, which considers both the in-progress and eventual system performance under arbitrary unknown imperfect \ac{CSI}. As an extended concept of reliability and robustness, resilience represents \textit{``the capability of a system to absorb the impact of unseen disruptions without prior information and finally adapt itself to the disruptions"}, addressing both of the two limitations in current research. Two key phases of resilience are \emph{absorption} and \emph{adaptation} \cite{reifert2024resiliencecriticalitybrothersarms}. “Absorption” is the system’s immediate reaction to the unseen disruption, such as switching operational modes or reconfiguring scheduling policies. These responses aim to ensure that the system continues to operate seamlessly and maintains the desired performance despite the disruption. During absorption, the system could also learn about the disruption from its effect, e.g., collecting data to accurately model the ongoing disruption or inferring its cause to localize and eliminate it. Following absorption, “adaptation” involves leveraging the acquired knowledge to mitigate the impact of the disruption and restore any degraded system performance. Ideally, the system is expected to recover to its original performance during adaptation, as if the disruption had not occurred. To this end, a resilient resource management framework for \ac{C-V2X} must therefore incorporate both phases to ensure sustained \ac{QoS} under the imperfect \ac{CSI} disruption.

\vspace{-10pt}
\subsection{Prior Works}
Although several reliable and robust resource management schemes have been studied to ensure desired \ac{QoS} under imperfect \ac{CSI} \cite{9400748, 8993812, 9374090, 9382930, 9857930, 10238756, 10213228}, the prior solutions may not be directly applicable to practical \ac{C-V2X} networks. For instance, prior works like \cite{9400748, 8993812, 9374090} assume that the imperfect \ac{CSI} error is either bounded by a known range or follows a certain distribution (e.g., Gaussian). In \cite{9400748} and \cite{8993812}, the authors studied the problem of power allocation and spectrum sharing for coexisting \ac{V2V} and \ac{V2I} links whereby the \ac{CSI} imperfection was assumed to follow a known distribution. Leveraging this prior knowledge, resources are allocated to meet \ac{QoS} requirements with high probability. In \cite{9374090}, the authors assumed that \ac{CSI} errors were bounded within a known range and incorporated a mapped fuzzy space to address uncertainty in joint time-frequency resource allocation. However, deploying these model-based approaches \cite{9400748, 8993812, 9374090} may lead to suboptimal performance in meeting desired \ac{QoS}, as accurately modeling imperfect \ac{CSI} is inherently challenging in highly dynamic vehicular networks. In the context of vehicular communication, the \ac{CSI} error characteristics can vary significantly across vehicular links and even within the same link under different traffic conditions. Thus, a predefined imperfect \ac{CSI} model that deviates from the actual channel state may further degrade the \ac{QoS}.

To overcome this limitation, data-driven approaches \cite{9382930, 9857930, 10238756, 10213228} have been proposed, eliminating the need for prior assumptions on \ac{CSI} errors. In general, these approaches rely on the design of robust resource allocation schemes that guarantee the worst-case \ac{QoS} by exploiting the information in collected imperfect \ac{CSI} samples. In \cite{9382930}, the authors exploited imperfect \ac{CSI} samples to construct a \ac{HPR} that probabilistically covers imperfect \ac{CSI} realizations, which was used to ensure a desired worst-case \ac{QoS}. Similarly, the work in \cite{9857930} employed a feasible region transformation method purely based on large-scale channel parameters. In \cite{10238756}, support vector clustering is used to capture \ac{CSI} error distribution in high-dimensional feature space, as a generalization of the \ac{HPR} method. In \cite{10213228}, a robust optimization problem based on the statistical characteristics of the \ac{CSI} (mean vector and covariance matrix) is formulated and solved. However, the solutions of \cite{9382930, 9857930, 10238756, 10213228} tend to be overly conservative while prioritizing worst-case \ac{QoS} at the expense of typical network performance, since the latent information of the \ac{CSI} error is not fully explored and exploited. Furthermore, none of the prior works \cite{9400748, 8993812, 9374090, 9382930, 9857930, 10238756, 10213228} considered the \ac{QoS} of \ac{C-V2X} networks during the data-driven process. From a resilience perspective, they only considered the \ac{QoS} in the adaptation phase while that in the absorption phase was largely ignored. 

\vspace{-8pt}
\subsection{Contributions}
The main contribution of this paper is a novel framework that defines, quantifies, and optimizes the resilience of a \ac{C-V2X} network in face of arbitrary unknown imperfect \ac{CSI}. Our goal is to meet the heterogeneous \ac{QoS} requirements of vehicular links by optimizing power allocation and \acp{RB} assignment. Specifically, we propose a resilience framework that can manage the impact of imperfect CSI across both \emph{absorption} and \emph{adaptation} phases. To our best knowledge, \textit{this is the first work that analyzes and optimizes the resilience of \ac{C-V2X} under imperfect \ac{CSI}, while, simultaneously considering the \ac{C-V2X}'s absorption and adaptation performance}. Our key contributions include:
\begin{itemize}
    \item We consider a \ac{C-V2X} network operating under arbitrary imperfect \ac{CSI} without prior information or assumptions. To meet the vehicular links' \ac{QoS} requirements, we formulate a bi-level optimization on the transmit power and \acp{RB} assignment. Due to the complexity of the bi-level structure and interdependence, we decouple the bi-level optimization into two sub-problems, solved sequentially in two phases: absorption and adaptation.
    \item We leverage the \ac{RSS} on vehicular links to estimate the \ac{PDF} of the \ac{CSI} error distribution in the absorption phase. Then, the \ac{MSE} between the true and estimated \ac{PDF} is defined as the \emph{adaptation capability} of the \ac{C-V2X} network. Based on the deconvolution estimation theory, an upper bound on the \ac{MSE} is derived and minimized by optimizing the \acp{RB} assignment and a dedicated absorption power scheme. Moreover, the derived upper bound shows a tradeoff between a high adaptation capability and a compromised network \ac{QoS} performance in the absorption phase. Consequently, we further incorporate a novel metric named \textit{\ac{HR}} to evaluate the severity of \ac{QoS} degradation in absorption phase. 
    \item After absorption, power schemes in the adaptation phase are optimized based on real-time imperfect \ac{CSI}. Using the estimated \ac{PDF}, the probability of satisfying the \ac{QoS} requirement is derived. Then, the original non-convex problem is approximated and solved by a one-dimensional search algorithm. Moreover, we analytically show that a satisfactory \ac{QoS} is jointly influenced by the \ac{C-V2X} network's absorption to the imperfect \ac{CSI} and the quality of the real-time \ac{CSI} in adaptation. 
    \item Simulation results show the proposed framework outperforms model-based and data-based benchmarks across both phases. Particularly, the proposed design reduces the \ac{V2V} delay when the delay exceeds the desired requirement by $35 \%$ and $56 \%$ and improves the average \ac{V2I} throughput by $14 \%$ and $16 \%$ the model-based and data-driven benchmarks, respectively, without compromising the network's \ac{QoS} in the absorption phase.
\end{itemize}

\vspace{-10pt}
\section{System Model}
Consider a \ac{C-V2X} network consisting of a set $\mathcal{N}$ of $N$ \ac{V2I} links. A centralized \ac{RSU} allocates orthogonal \acp{RB} to each \ac{V2I} link for cellular uplink and downlink transmission through the Uu interface \cite{9345798}. Within the coverage of the \ac{RSU}, a set $\mathcal{M}$ of $M$ \ac{V2V} links that use \ac{NR} \ac{C-V2X} transmission mode-1 through \ac{NR} sidelinks are deployed. Typically, the \ac{V2V} links will be allocated dedicated \acp{RB} to transmit time-sensitive and safety-critical messages. However, this \ac{RB} partitioning design may be inefficient since dedicated \acp{RB} will be limited for emergent situations, e.g., a sudden surge in \ac{V2V} transmission. Thus, we consider a more flexible utilization of \acp{RB} in which \ac{V2V} links will reuse the \ac{RB}s allocated to the \ac{V2I} links during the \ac{V2I} uplink transmission \cite{7913583}, as shown in Fig. \ref{System Model}. In our system, a \ac{V2V} link will only reuse a single \ac{RB} and the allocated \ac{RB} of a given \ac{V2I} link can be only shared with one \ac{V2V} link to mitigate interference \cite{8993812, 9400748, 9374090, 9382930, 9857930}. In practice, the number of \ac{V2I} links $N$ is generally no smaller than the number of \ac{V2V} links $M$ \cite{9400748}. Thus, we consider the worst-case scenario $M = N$. Under \ac{NR} \ac{C-V2X} transmission mode 1 \cite{9530506}, a centralized \ac{RSU} will optimize the matching between \ac{V2V} links and \ac{V2I} links for sharing the same \ac{RB} and allocate the transmit power of these links to satisfy the desired \ac{QoS} requirements. 
\begin{figure}[t]
	\centering
	\includegraphics[width=0.33\textwidth]{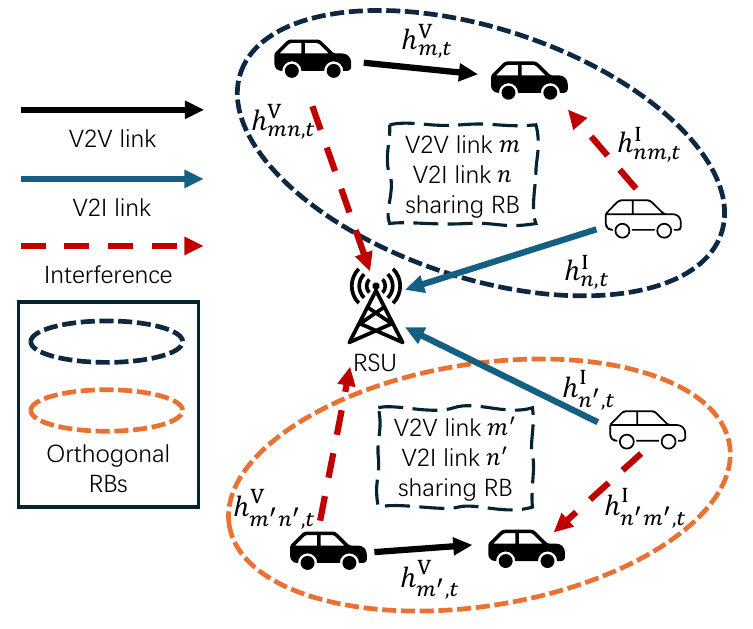}
    \vspace{-6pt}
	\caption{\small{System model of the considered \ac{C-V2X} network. Orthogonal \acp{RB} are shared by vehicular links with heterogeneous \ac{QoS} requirements.}}
 \label{System Model}
 \vspace{-15pt}
\end{figure}

\vspace{-10pt}
\subsection{Vehicular \ac{QoS} requirements}
We assume that the system operates in a time-slotted manner. Each time slot is equal to the coherence time of the channel during which the \ac{CSI} is invariant. We define $p^{\textrm{I}}_{n,t}$ and $p^{\textrm{V}}_{m,t}$ as, respectively, the transmit power of \ac{V2I} link $n$ and \ac{V2V} link $m$ at time slot $t$. The \ac{SINR} over \ac{V2I} link $n$ can be obtained as $\gamma^{\textrm{I}}_{n,t}= \frac{ p^{\textrm{I}}_{n,t} h^{\textrm{I}}_{n,t} }{ \sum_{m=1}^{M} \alpha_{mn} p^{\textrm{V}}_{m,t} h^{\textrm{V}}_{mn,t} + \sigma^2 }$, where $h^{\textrm{I}}_{n,t}$ is the channel gain on \ac{V2I} link $n$, $h^{\textrm{V}}_{mn,t}$ is the channel gain on the interference link from \ac{V2V} link $m$ to \ac{V2I} link $n$, and $\sigma^2$ is the power of additive white Gaussian noise. Moreover, $\alpha_{mn}$ is a binary variable with $\alpha_{mn}=1$ indicating that \ac{V2I} link $n$ is sharing its \ac{RB} with \ac{V2V} link $m$. Here, we do not consider the time-varying nature of $\alpha_{mn}$ since frequently changing the link matching would introduce additional delays, increase system overhead, and potentially degrade \ac{QoS} due to handover. Consequently, the matching process varies at a time scale that is generally vary much slower than the \ac{CSI} and, thus, we focus on a time interval of $C$ time slots where all matching are fixed. Similarly, we can obtain the \ac{SINR} over \ac{V2V} link $m$ as $\gamma^{\textrm{V}}_{m,t}= \frac{ p^{\textrm{V}}_{m,t} h^{\textrm{V}}_{m,t} }{ \sum_{n=1}^{N} \alpha_{mn} p^{\textrm{I}}_{n,t} h^{\textrm{I}}_{nm,t} + \sigma^2 }$, where $h^{\textrm{V}}_{m,t}$ and $h^{\textrm{I}}_{nm,t}$ represent the channel gain on \ac{V2V} link $m$ and the interference link from \ac{V2I} link $n$ to \ac{V2V} link $m$. 

We define $\boldsymbol{p}_{\mathcal{M},t} = \left[ p^{\textrm{V}}_{1,t}, \ldots, p^{\textrm{V}}_{M,t} \right]$, $\boldsymbol{p}_{\mathcal{N},t} = \left[ p^{\textrm{I}}_{1,t}, \ldots, p^{\textrm{I}}_{M,t} \right]$, and $\boldsymbol{A} = \left[\boldsymbol{\alpha}_{1}, \ldots, \boldsymbol{\alpha}_{N}\right]$ with $\boldsymbol{\alpha}_{n} = \left[ \alpha_{1n}, \ldots, \alpha_{Mn}\right]^T, \forall n \in \mathcal{N}$. Note that $\boldsymbol{p}_{\mathcal{M},t}$ and $\boldsymbol{p}_{\mathcal{N},t}$ are optimized in real time while $\boldsymbol{A}$ is fixed and should be determined at time slot $t = 1$. By further defining $\boldsymbol{h}^{nm}_{t} = \left[ h^{\textrm{I}}_{n,t}, h^{\textrm{I}}_{nm,t}, h^{\textrm{V}}_{m,t}, h^{\textrm{V}}_{mn,t} \right]$ and $\mathcal{H}_{t} = \{ \boldsymbol{h}^{nm}_{t} \mid n \in \mathcal{N}, m \in \mathcal{M} \}$ as the overall \ac{CSI} set at $t$, the dynamics of $\boldsymbol{p}_{\mathcal{M},t}$, $\boldsymbol{p}_{\mathcal{N},t}$, and $\boldsymbol{A}$ over different times scale are illustrated in Fig.~\ref{timescale_match_power}. Aligned with the literature \cite{8993812, 9400748, 9374090, 9382930, 9857930}, different \ac{QoS} requirements over vehicular links are considered, i.e., capacity on \ac{V2I} links and latency on \ac{V2V} links. Thus, the \ac{QoS} requirements on \ac{V2I} link $n$ and \ac{V2V} link $m$ at time slot $t$ will be given by:
\begin{equation}
\label{throughput}
    R_{n,t} \triangleq B \log (1+\gamma^{\textrm{I}}_{n,t}) \geq R_0, \forall n \in \mathcal{N},
\end{equation}
\vspace{-10pt}
\begin{equation}
\label{delay}
   \tau_{m,t} \triangleq \frac{D}{B \log (1+\gamma^{\textrm{V}}_{m,t})} \leq \tau_0, \forall m \in \mathcal{M},
\end{equation}
where $B$ is the bandwidth of each \ac{RB}, $D$ is the \ac{V2V} link packet size, and $R_0$ and $\tau_0$ are respectively the given throughput and delay requirements. 

\vspace{-8pt}
\subsection{Imperfect \ac{CSI} Model}
For all vehicular links, the channel gain is composed of large-scale and small-scale fading. Without loss of generality, we consider an arbitrary component $h_t$ of $\boldsymbol{h}^{nm}_{t}$, modeled as $h_t = L_t |g_t|^2$, where $L_t$ and $g_t$ represent the large-scale and small-scale fading components, respectively \cite{9400748, 8993812, 9374090, 9382930, 9857930, 10238756, 10213228}. Specifically, based on the free space propagation path-loss model, the large-scale fading is modeled as $L_t = G_t \zeta d_t^{-\alpha}$ with path loss gain $G_t$, log-normal shadow fading gain $\zeta$, path loss exponent $\alpha$, and the link distance $d_t$ of this vehicular link. Due to the severe multipath effects in urban vehicular communication environments, the small-scale fading is modeled as Rayleigh fading and represented by $g_t \sim \mathcal{C N}(0,1)$. 

Our goal is to determine $\boldsymbol{p}_{\mathcal{M},t}$, $\boldsymbol{p}_{\mathcal{N},t}$, and $\boldsymbol{A} $ that meet the \ac{QoS} requirements in \eqref{throughput} and \eqref{delay}. The primary challenge of this problem lies in accurately obtaining the \ac{CSI} in $\mathcal{H}_t$ in vehicular network. Specifically, in \ac{NR} \ac{C-V2X} mode 1, all involved vehicular links are required to report their \ac{CSI} to the \ac{RSU}, either directly through \ac{PUCCH} or relayed by the \ac{PSFCH} \cite{8998153}. However, as a consequence of the Doppler shift and multipath effects, the coherence time in dynamic vehicular network is very short. Thus, the \ac{CSI} fed back to the \ac{RSU} may become outdated due to the delay introduced in \ac{CSI} feedback relay and establishment of \ac{PUCCH} and \ac{PSFCH}. In other words, when the designed power $\boldsymbol{p}_{\mathcal{M},t}$ and $\boldsymbol{p}_{\mathcal{N},t}$ are deployed, the true channel will inevitably deviate from the \ac{CSI} feedback received by the \ac{RSU}, which induces the imperfect \ac{CSI} problem\footnote{Such \ac{CSI} imperfection cannot be mitigated through more advanced estimation algorithms since it is fundamentally caused by the short coherence time and rapid channel variations.} in vehicular network \cite{9400748, 8993812, 9374090, 9382930, 9857930, 10238756, 10213228}. As a result, $\boldsymbol{p}_{\mathcal{M},t}$ and $\boldsymbol{p}_{\mathcal{N},t}$ are actually determined according to the imperfect \ac{CSI} feedback $\mathcal{\hat{H}}_t = \{ \hat{\boldsymbol{h}}^{nm}_{t} \mid n \in \mathcal{N}, m \in \mathcal{M} \}$.
\begin{figure}[t]
\captionsetup[subfigure]{font=small}  
    \centering
    \subfloat[]{
        \includegraphics[width=0.42\textwidth]{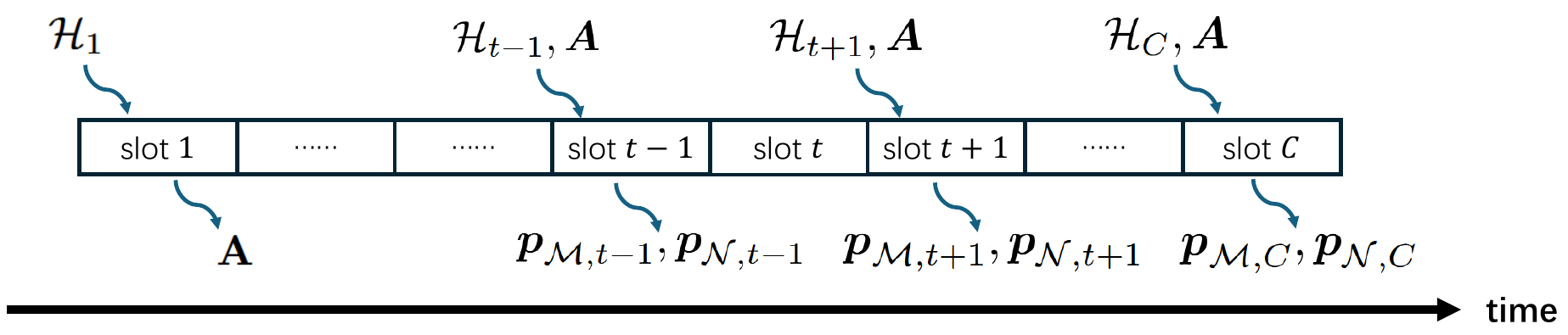}
        \label{timescale_match_power}
    }
    \\
    \vspace{-12pt}
    \subfloat[]{
        \includegraphics[width=0.42\textwidth]{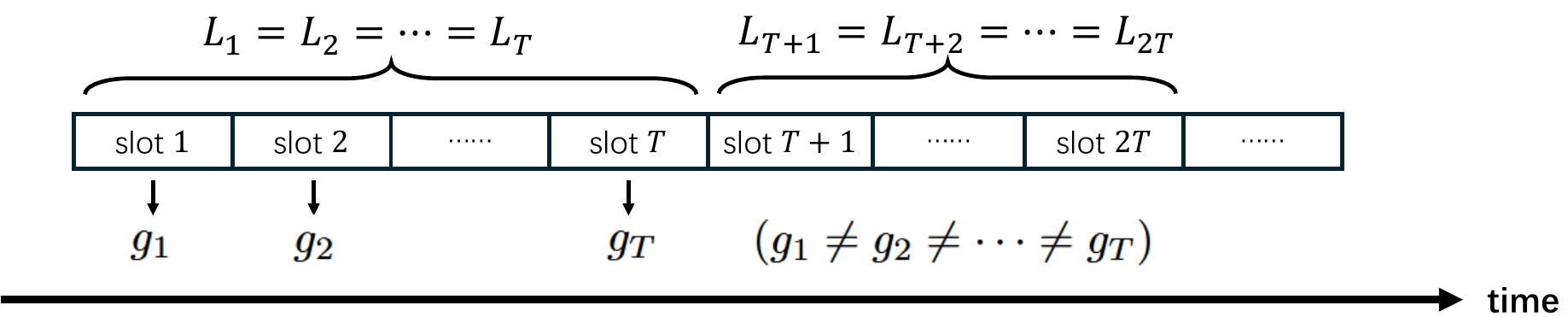}
        \label{timescale_channel}
    }
        \vspace{-6pt}
    \caption{Illustration of how the various parameters change over time scales: a) matching $\boldsymbol{A} $ and transmit power scheme $\boldsymbol{p}_{\mathcal{M},t}$, $\boldsymbol{p}_{\mathcal{N},t}$, and b) large-scale fading $L_t$ and small-scale fading $g_t$.}
    \vspace{-15pt}
\end{figure}

Since the large-scale fading $L_t$ is dominantly determined by the locations of the vehicles and varies on a slow timescale compared to small-scale fading \cite{7270335}, we assume that $L_t$ can be perfectly estimated by the \ac{RSU} and remains constant over $T$ time slots \cite{9741813}, as shown in Fig. \ref{timescale_channel}. Therefore, the imperfection in \ac{CSI} dominantly stems from small-scale fading. Take $\hat{\boldsymbol{h}}^{nm}_{t}$ and $\boldsymbol{h}^{nm}_{t}$ as an example, we assume $\hat{g}^{\textrm{I}}_{n,t} = g^{\textrm{I}}_{n,t}$ and $\hat{g}^{\textrm{V}}_{mn,t} = g^{\textrm{V}}_{mn,t}$, while $\hat{g}^{\textrm{V}}_{m,t} \neq g^{\textrm{V}}_{m,t}$ and $\hat{g}^{\textrm{I}}_{nm,t} \neq g^{\textrm{I}}_{nm,t}$. The reason for assuming $\hat{g}^{\textrm{I}}_{n,t} = g^{\textrm{I}}_{n,t}$ and $\hat{g}^{\textrm{V}}_{mn,t} = g^{\textrm{V}}_{mn,t}$ is that $\hat{g}^{\textrm{I}}_{n,t}$ and $\hat{g}^{\textrm{V}}_{mn,t}$ are directly estimated by the \ac{RSU} without relaying. However, $\hat{g}^{\textrm{V}}_{m,t}$ and $\hat{g}^{\textrm{I}}_{nm,t}$ are both estimated by the transmitting vehicle on \ac{V2V} links and relayed to the \ac{RSU} through \ac{PSFCH}, which results in severe \ac{CSI} feedback delay and \ac{CSI} imperfection. Since \ac{V2V} links are generally established when the relative movement of two vehicles are comparatively stable and predictable \cite{8993812}, we first model $\hat{g}^{\textrm{V}}_{m,t}$ and $g^{\textrm{V}}_{m,t}$ directly using a first-order Gauss-Markov process:
\begin{equation}
\label{i v2v}
   | g^{\textrm{V}}_{m,t} | ^2 = | \hat{g}^{\textrm{V}}_{m,t} |^2 + (1 - \delta_{m,t}^2) \left( | {e}_{m,t} |^2 - | \hat{g}^{\textrm{V}}_{m,t} |^2 \right),
\end{equation}
where $\delta_{m,t} = J_0\left(2 \pi f_D \Delta_t\right)$ is the coefficient given by Jakes statistical model \cite{7913583} and the error term ${e}_{m,t} \sim \mathcal{C N}(0,1)$ is \ac{i.i.d.} across different time slots. Specifically, $J_0$ is the zero-order Bessel function of the first kind, $\Delta_t$ is the \ac{CSI} feedback delay, and $f_D = \frac{v f_c}{c}$ is the maximum Doppler frequency with $c$ being the speed of light, where $v$ and $f_c$ are the vehicle speed and carrier frequency respectively. In general, $\delta_{m,t}$ is available to \ac{V2V} link $m$ \cite{7913583, 9400748, 8993812}, as $\Delta_t$ can be inferred from the timestamp used in \ac{CSI} estimation. However, it is almost impossible to find a suitable model for $\hat{g}^{\textrm{I}}_{nm,t}$ and $g^{\textrm{I}}_{nm,t}$ on the interference link, since the relative movement between \ac{V2V} link $m$ and \ac{V2I} link $n$ is highly dynamic and hard to model.
Thus, we only assume an additive error ${e}_{nm,t}$, where ${e}_{nm,t} \sim \mathcal{E}_{nm}$ is an \ac{i.i.d.} \ac{RV} across different time slots of unknown distribution $\mathcal{E}_{nm}$. Then, $\hat{g}^{\textrm{I}}_{nm,t}$ and ${g}^{\textrm{I}}_{nm,t}$ are modeled as:
\begin{equation}
\label{i v2i}
   | g^{\textrm{I}}_{nm,t} | ^2 = | \hat{g}^{\textrm{I}}_{nm,t} |^2 + {e}_{nm,t}.
\end{equation}

\vspace{-12pt}
\section{Problem Formulation}
Due to the presence of stochastic \ac{CSI} error ${e}_{m,t} \sim \mathcal{C N}(0,1)$ in \eqref{i v2v} and ${e}_{nm,t} \sim \mathcal{E}_{nm}$ in \eqref{i v2i}, we can only seek to satisfy \eqref{delay} in a probabilistic manner. Specifically, the \ac{V2V} link's \ac{QoS} requirement is redefined to ensure that the probability of \eqref{delay} is at least $P_0$, i.e., $\mathbb{P} \{ \tau_{m,t} \leq \tau_0 \} \geq P_0, \forall m \in \mathcal{M}$. Based on \eqref{delay}, \eqref{i v2v}, and \eqref{i v2i}, we can further derive $\mathbb{P} \{ \tau_{m,t} \leq \tau_0 \}$ as
\begin{equation}
\small
\label{prob}
\begin{aligned}
    & P_{m,t}(\mathcal{E}_{m}, \boldsymbol{p}_{\mathcal{M},t}, \boldsymbol{p}_{\mathcal{N},t}, \boldsymbol{A} ) \triangleq \mathbb{P} \{ \tau_{m,t} \leq \tau_0 \} \\
    = & \mathbb{P} \Biggl\{ \frac{p^{\textrm{V}}_{m,t} L^{\textrm{V}}_{m,t}}{ \gamma_{\textrm{V}} } (1 - \delta_{m,t}^2) | {e}_{m,t} |^2 \\
    & \quad - \sum_{n=1}^{N} \alpha_{mn} p^{\textrm{I}}_{n,t} L^{\textrm{I}}_{nm,t} {e}_{nm,t} \geq b_{m,t} \Biggm| {e}_{nm,t} \sim \mathcal{E}_{nm}, n \in \mathcal{N} \Biggr\},
\end{aligned}
\end{equation}
where we define $\mathcal{E}_{m} \triangleq \left\{ \mathcal{E}_{1m}, \ldots, \mathcal{E}_{Nm} \right\}$, $b_{m,t} = \sigma^2 + \sum_{n=1}^{N} \alpha_{mn} p^{\textrm{V}}_{n,t} L^{\textrm{V}}_{nm,t} | \hat{g}^{\textrm{V}}_{nm,t} | ^2 - \frac{p^{\textrm{I}}_{m,t} L^{\textrm{I}}_{m,t} \delta_{m,t}^2 | \hat{g}^{\textrm{I}}_{m,t} | ^2}{\gamma_{\textrm{V}}}$ is determined by the imperfect \ac{CSI} $\hat{g}^{\textrm{I}}_{m,t}$ and $\hat{g}^{\textrm{V}}_{nm,t}$, and $\gamma_{\textrm{V}} = 2^{\frac{D}{B\tau_0}}-1$ is a constant. However, \eqref{prob} is still intractable since the error distribution $\mathcal{E}_{nm}$ is unknown. Moreover, it is impractical to assume a certain prior distribution on $\mathcal{E}_{nm}$. 

To design a resilient \ac{C-V2X} network under the distribution of imperfect \ac{CSI}, we propose a two-phase framework next. Specifically, the network first estimates the error distribution $\mathcal{E}_{nm}$ through a dedicated \emph{absorption phase} and then recovers the \ac{QoS} in the \emph{adaptation phase} exploiting the estimated result $\hat{\mathcal{E}}_{m} \triangleq \{ \hat{\mathcal{E}}_{1m}, \ldots, \hat{\mathcal{E}}_{Nm} \}$. Unlike prior designs that focus solely on reliability or robustness and ensure network performance only during adaptation, our proposed framework simultaneously considers absorption performance and analytically evaluates the impact of absorption strategies on network's adaptation performance. The proposed framework instills resilience into the \ac{C-V2X} network under \ac{CSI} imperfection without any prior information or external intervention, as explained in Sections \ref{absorption section} and \ref{adaptation section}. 

The goal of the proposed two-phase resilient framework is to ensure that the probability in \eqref{prob}, computed using the estimated error distribution $\hat{{\mathcal{E}}}_{m}$, closely approximates its true counterpart. This prevents significant \ac{QoS} degradation on \ac{V2V} links under the imperfect \ac{CSI} disruption. To this end, we first define the overall deviation at time slot $t$ as
\begin{equation}
\vspace{-5pt}
\label{MSE obj}
\begin{aligned}
    & J_t(\boldsymbol{p}_{\mathcal{M},t}, \boldsymbol{p}_{\mathcal{N},t}, \boldsymbol{A} ) \\
    = & \sum_{m\in\mathcal{M}} \left[P_{m,t}(\hat{\mathcal{E}}_{m}, \boldsymbol{p}_{\mathcal{M},t}, \boldsymbol{p}_{\mathcal{N},t}, \boldsymbol{A} ) - P_{m,t}(\mathcal{E}_{m}, \boldsymbol{p}_{\mathcal{M},t}, \boldsymbol{p}_{\mathcal{N},t}, \boldsymbol{A} ) \right]^2.
\end{aligned}
\end{equation}
Then, we can formulate a bi-level optimization problem given in \eqref{opt1}, 
\begin{figure*}
\vspace{-20pt}
\begin{subequations}
\label{opt1}
\begin{IEEEeqnarray}{s,rCl'rCl'rCl}
& \underset{\boldsymbol{A} }{\text{min}} &\quad& \sum_{t=1}^{C} J_t(\boldsymbol{p}^{*}_{\mathcal{M},t}, \boldsymbol{p}^{*}_{\mathcal{N},t}, \boldsymbol{A} )\label{obj1}\\
&\text{s.t.} && \sum_{m \in \mathcal{M}} \alpha_{mn} = 1, \forall n \in \mathcal{N}, \sum_{n \in \mathcal{N}} \alpha_{mn} = 1, \forall m \in \mathcal{M}  \label{c1-1},\\
&&& (\boldsymbol{p}^{*}_{\mathcal{M},t}, \boldsymbol{p}^{*}_{\mathcal{N},t}) = \arg \min_{(\boldsymbol{p}_{\mathcal{M},t},  \boldsymbol{p}_{\mathcal{N},t}) \in \mathcal{G}_t (\boldsymbol{A} )} J_t(\boldsymbol{p}_{\mathcal{M},t}, \boldsymbol{p}_{\mathcal{N},t}, \boldsymbol{A} ), \ t = 1, \ldots, C   \label{c1-2}.
\end{IEEEeqnarray}
\end{subequations}
\vspace{-25pt}
\end{figure*}
where \eqref{obj1} is the objective function of the upper-level problem considering \eqref{MSE obj} over the whole $C$ time slots, \eqref{c1-1} is the matching constraint on $\boldsymbol{A} $, and \eqref{c1-2} is the lower-level problem at time slot $t$, with feasible region $\mathcal{G}_t(\boldsymbol{A} )$ given by:
\begin{equation}
\small
\vspace{-5pt}
\label{feasible}
\begin{aligned}
&\mathcal{G}_t (\boldsymbol{A} ) \\
= & \left\{ (\boldsymbol{p}_{\mathcal{M},t}, \boldsymbol{p}_{\mathcal{N},t}) \middle|
\begin{array}{l}
R_{n,t}( \boldsymbol{p}_{\mathcal{M},t}, \boldsymbol{p}_{\mathcal{N},t}, \boldsymbol{A} ) \geq R_0, \\
P_{m,t}(\hat{\mathcal{E}}_{m}, \boldsymbol{p}_{\mathcal{M},t}, \boldsymbol{p}_{\mathcal{N},t}, \boldsymbol{A} )\geq P_0,\\
p^{\textrm{V}}_{\textrm{min}} \leq p^{\textrm{V}}_{m,t} \leq p^{\textrm{V}}_{\textrm{max}}, \\
p^{\textrm{I}}_{\textrm{min}} \leq p^{\textrm{I}}_{n,t} \leq p^{\textrm{I}}_{\textrm{max}}, \forall n \in \mathcal{N}, \forall m \in \mathcal{M}
\end{array} 
\right\}. 
\end{aligned}
\end{equation}
In \eqref{feasible}, the vehicular links' requirements on \ac{QoS}, minimum transmit power, and maximum transmit power are considered. To make \eqref{feasible} more precise and rigorous, we rewrite $R_{n,t}$ in \eqref{throughput} as $R_{n,t}( \boldsymbol{p}_{\mathcal{M},t}, \boldsymbol{p}_{\mathcal{N},t}, \boldsymbol{A} )$. Notably, the constraint $P_{m,t}(\hat{\mathcal{E}}_{m}, \boldsymbol{p}_{\mathcal{M},t}, \boldsymbol{p}_{\mathcal{N},t}, \boldsymbol{A} ) \geq P_0$ in \eqref{feasible} represents the probability based on estimated $\hat{\mathcal{E}}_{m}$ other than $\mathcal{E}_{m}$. 
To elaborate further on the bi-level problem \eqref{opt1}, we first focus on the lower-level problem. In \eqref{c1-2}, $\boldsymbol{p}_{\mathcal{M},t}$ and $ \boldsymbol{p}_{\mathcal{N},t}$ are optimized in each time slot $t$ based on the real-time \ac{CSI} to meet the desired \ac{QoS}, as in \eqref{feasible}. However, the time-invariant matching $\boldsymbol{A}$ is determined by the upper-level problem of \eqref{c1-1}. In other words, the upper-level problem is required to find the optimal $\boldsymbol{A} $ over an extended period without knowing $\boldsymbol{p}^{*}_{\mathcal{M},t}$, $\boldsymbol{p}^{*}_{\mathcal{N},t}$, and $ J_t(\boldsymbol{p}^{*}_{\mathcal{M},t}, \boldsymbol{p}^{*}_{\mathcal{N},t}, \boldsymbol{A} )$ in the future, as captured by the summation in \eqref{obj1}. 

Solving the bi-level problem \eqref{opt1} presents three challenges: (i) The true imperfection distribution $\mathcal{E}_{m}$ is unknown for both the upper-level and lower-level problems; (ii) For the upper-level problem, the objective function \eqref{obj1} is not tractable since we have no access to the \ac{CSI} in the future. In other words, we need to determine $\boldsymbol{A} $ at $t=1$ without knowing $\mathcal{\hat{H}}_t$, $\boldsymbol{p}^{*}_{\mathcal{M},t}$, and $\boldsymbol{p}^{*}_{\mathcal{N},t}$ for $\forall t > 1$; (iii) The coupling of upper-level and lower-level problems introduces a nested structure and interdependence, which complicates the optimization \eqref{opt1}. Specifically, the optimal matching $\boldsymbol{A} $ depends on the transmit powers $\boldsymbol{p}_{\mathcal{M},t}$ and $\boldsymbol{p}_{\mathcal{N},t}$ in each time slot while the optimal $\boldsymbol{p}_{\mathcal{M},t}$ and $\boldsymbol{p}_{\mathcal{N},t}$ depend, in turn, on the optimal matching $\boldsymbol{A} $. 

To address these challenges, we aim at a sub-optimal solution by decoupling \eqref{opt1} into two sub-problems that are solved sequentially in the aforementioned two-phase framework, as shown in Fig. \ref{timescale_two_phase}. Specifically, the matching variable $\boldsymbol{A} $ is determined together with a dedicated absorption power scheme $\boldsymbol{p}_{\mathcal{M}, \textrm{a}} = \left[ p_{1, \textrm{a}}^{\textrm{V}}, \ldots, p_{M, \textrm{a}}^{\textrm{V}} \right]$ and $\boldsymbol{p}_{\mathcal{N}, \textrm{a}} = \left[ p_{1, \textrm{a}}^{\textrm{I}}, \ldots, p_{N, \textrm{a}}^{\textrm{I}} \right]$ in the \emph{absorption phase} for maintaining network \ac{QoS} and accurately estimating $\mathcal{E} \triangleq \{ \mathcal{E}_1, \ldots, \mathcal{E}_M \}$, without need for future \ac{CSI}. Subsequently, $\boldsymbol{p}_{\mathcal{M},t}$ and $\boldsymbol{p}_{\mathcal{N},t}$ are optimized, based on the estimation result $\hat{\mathcal{E}} \triangleq \{ \hat{\mathcal{E}}_1, \ldots, \hat{\mathcal{E}}_M \}$ and real-time \ac{CSI} during the \emph{adaptation phase}, to recover the \ac{QoS} of vehicular links. Different from the reliable and robust designs widely explored in recent literature \cite{9400748, 8993812, 9374090, 9382930, 9857930, 10238756, 10213228}, the proposed framework solves problem \eqref{opt1} from a resilience perspective by considering the interplay of three key processes: maintaining network \ac{QoS}, accurately estimating \ac{CSI} imperfection, and effectively leveraging the estimated result to recover network \ac{QoS}.

\vspace{-10pt}
\section{Absorption Phase}
\label{absorption section}
We now discuss the \emph{absorption phase} where the \ac{PDF} of error distribution is estimated by the \ac{RSU}. Specifically, we derive an analytical upper bound on the \ac{MSE} between the estimated and true \ac{PDF}. From a resilience perspective, this upper bound is defined as the \emph{adaptation capability} of the \ac{C-V2X} network and minimized in the upper-level problem of \eqref{opt1}. Based on our analysis, we also show a tradeoff between the communication \ac{QoS} in the absorption phase and the \ac{C-V2X} network's adaptation capability, which is captured in the optimization via a novel metric named \emph{hazard rate} (HR).
\vspace{-8pt}
\subsection{Deconvolution based Estimation}
To estimate the \ac{PDF} of error distribution and sustain the \ac{QoS} of the \ac{C-V2X} network, the \ac{RSU} will switch to a dedicated absorption phase lasting for $T$ time slots, as shown in Fig. \ref{timescale_two_phase}. Our goal is to define and find the optimal absorption power scheme $\boldsymbol{p}_{\mathcal{M}, \textrm{a}}$, $\boldsymbol{p}_{\mathcal{N}, \textrm{a}}$, and the matching variable $\boldsymbol{A}$. Since $\boldsymbol{A} $ should satisfy \eqref{c1-1}, we first analyze the optimal power scheme of a general case in which \ac{V2I} link $n$ is sharing its \ac{RB} with \ac{V2V} link $m$, i.e., $\alpha_{mn} = 1$. We assume that the $T$ time slots align with the interval where the large-scale fading parameters $L^{\textrm{V}}_{m, \textrm{a}}$ and $L^{\textrm{I}}_{nm, \textrm{a}}$, as well as the channel coefficient $\delta_m$ in \eqref{i v2v} are invariant.

Next, we can leverage the \ac{RSS} at the receiving vehicle of \ac{V2V} link $m$ for estimation. Particularly, the true \ac{RSS} at the receiving vehicle of \ac{V2V} link $m$ at time slot $k$ can be given as ${r}_{m,k} = p^{\textrm{I}}_{n, \textrm{a}} L^{\textrm{I}}_{nm, \textrm{a}} | g^{\textrm{I}}_{nm,k} | ^2 + p^{\textrm{V}}_{m,\textrm{a}} L^{\textrm{V}}_{m,\textrm{a}} | g^{\textrm{V}}_{m,k} | ^2 + \sigma^2$. The true \ac{RSS} ${r}_{m,k}$ will be fed back to the \ac{RSU} through \ac{PUCCH} and \ac{PSFCH}, forming a true \ac{RSS} set ${\mathcal{R}_m} = \left\{ {r}_{m,1}, \ldots, {r}_{m,T} \right\}$. Correspondingly, the \ac{RSU} can directly calculate the nominal \ac{RSS} at time slot $k$ based on $\hat{\mathcal{H}}_k$, which is given by $\hat{r}_{m,k} = p^{\textrm{I}}_{n,\textrm{a}} L^{\textrm{I}}_{nm,\textrm{a}} | \hat{g}^{\textrm{I}}_{nm,k} | ^2 + p^{\textrm{V}}_{m, \textrm{a}} L^{\textrm{V}}_{m,\textrm{a}} | \hat{g}^{\textrm{V}}_{m,k} | ^2 + \sigma^2$. Thus, the \ac{RSU} can form a nominal \ac{RSS} set $\hat{\mathcal{R}}_{m} = \left\{ \hat{r}_{m,1}, \ldots, \hat{r}_{m,T} \right\}$ at the end of the absorption phase. Then, the \ac{RSU} can collect a sequence of data samples $\mathcal{Z}_m = \left\{ z_{m,1}, \ldots, z_{m,T} \right\}$ with $z_{m,k}$ defined as
\begin{equation}
\label{data sample}
\begin{aligned}
   z_{m,k}  
   & \triangleq \frac{{r}_{m,k} - \hat{r}_{m,k}}{p^{\textrm{I}}_{n,\textrm{a}} L^{\textrm{I}}_{nm,\textrm{a}}} + \frac{p^{\textrm{V}}_{m,\textrm{a}} L^{\textrm{V}}_{m,\textrm{a}} }{p^{\textrm{I}}_{n,\textrm{a}} L^{\textrm{I}}_{nm,\textrm{a}}}(1 - \delta_{m}^2)| \hat{g}^{\textrm{V}}_{m,k} | ^2 \\
   & = {e}_{nm, k} + \frac{p^{\textrm{V}}_{m,\textrm{a}} L^{\textrm{V}}_{m, \textrm{a}} }{p^{\textrm{I}}_{n,\textrm{a}} L^{\textrm{I}}_{nm,\textrm{a}} }(1 - \delta_{m}^2) | {e}_{m, k} |^2.
\end{aligned}
\end{equation}

In \eqref{data sample}, ${e}_{nm, k}$ and ${e}_{m, k}$ are the realizations of the error in \eqref{i v2v} and \eqref{i v2i} at time slot $k$. Note that neither ${e}_{nm, k}$ nor ${e}_{m, k}$ can be obtained by the \ac{RSU}; however, the value of $z_k$ is accessible since all parameters in the first equation of \eqref{data sample} are known. Due to the \ac{i.i.d.} error ${e}_{nm, k}$ and ${e}_{m, k}$, $\mathcal{Z}$ is essentially a sequence of \ac{i.i.d.} samples from \ac{RV} $Z = {e}_{nm} + Y$ with ${e}_{nm} \sim \mathcal{E}_{nm}$, $Y = \lambda_Y^{-1} | {e}_{m} |^2 \sim \text{exp} (\lambda_Y)$, and $\lambda_Y = \frac{p^{\textrm{I}}_{n,\textrm{a}} L^{\textrm{I}}_{nm,\textrm{a}} }{p^{\textrm{V}}_{m,\textrm{a}} L^{\textrm{V}}_{m,\textrm{a}} (1 - \delta_{m}^2)}$.
Since $Z$ is the sum of two independent \ac{RV} ${e}_{nm}$ and $Y$, deriving the \ac{PDF} of $\mathcal{E}_{nm}$ through $\mathcal{Z}_m$ is essentially a deconvolution problem \cite{10.3150/08-BEJ146}. We define $f_E(e_{nm})$ as the \ac{PDF} of $\mathcal{E}_{nm}$, which is simplified to $f_{E,m}$ hereinafter for clarity. Given the Fourier transforms of $f_Z$, $f_{E,m}$, and $f_Y$ denoted by $F\left\{ f_Z \right\}$, $F\left\{ f_{E,m} \right\}$, $F\left\{ f_Y \right\}$, we can apply the Parseval's theorem and approximate $F\left\{ f_{E,m} \right\}$ as follows:
\begin{equation}
\begin{aligned}
\label{Fourier}
    F\left\{ f_{E,m} \right\} 
    = & \frac{F\left\{ f_Z \right\}}{F\left\{ f_Y \right\}} \approx \frac{1}{T}\sum_{k=1}^T e^{-jwz_{m,k}}(1+\frac{jw}{\lambda_Y}),
\end{aligned}
\end{equation}
where $F\left\{ f_Z \right\}$ is approximated by its empirical counterpart $\frac{1}{T}\sum_{k=1}^T e^{-jwz_k}$ from $\mathcal{Z}_m$ \cite{doi:10.1080/02331889008802238}. From \eqref{Fourier}, we can obtain $\hat{f}_{E,m}$, the estimation of $f_{E,m}$, by the inverse Fourier transform:
\begin{equation}
\begin{aligned}
\label{hat pdf}
    \hat{f}_{E,m} 
    = & \frac{1}{2 \pi T}  \sum_{k=1}^T \int_{-\infty}^{\infty}  e^{-jw(z_{m,k}-e_{nm})}  (1+\frac{jw}{\lambda_Y}) dw \\
    \approx & \frac{1}{2 \pi T}  \sum_{k=1}^T \int_{-K\pi}^{K \pi}  e^{-jw(z_{m,k}-e_{nm})}  (1+\frac{jw}{\lambda_Y}) dw,
\end{aligned}
\end{equation}
where the integral is truncated with constant $K$ to ensure the convergence of \eqref{hat pdf} \cite{10.3150/08-BEJ146}.
\begin{figure*}[t]
\vspace{-10pt}
	\centering
	\includegraphics[width=0.68\textwidth]{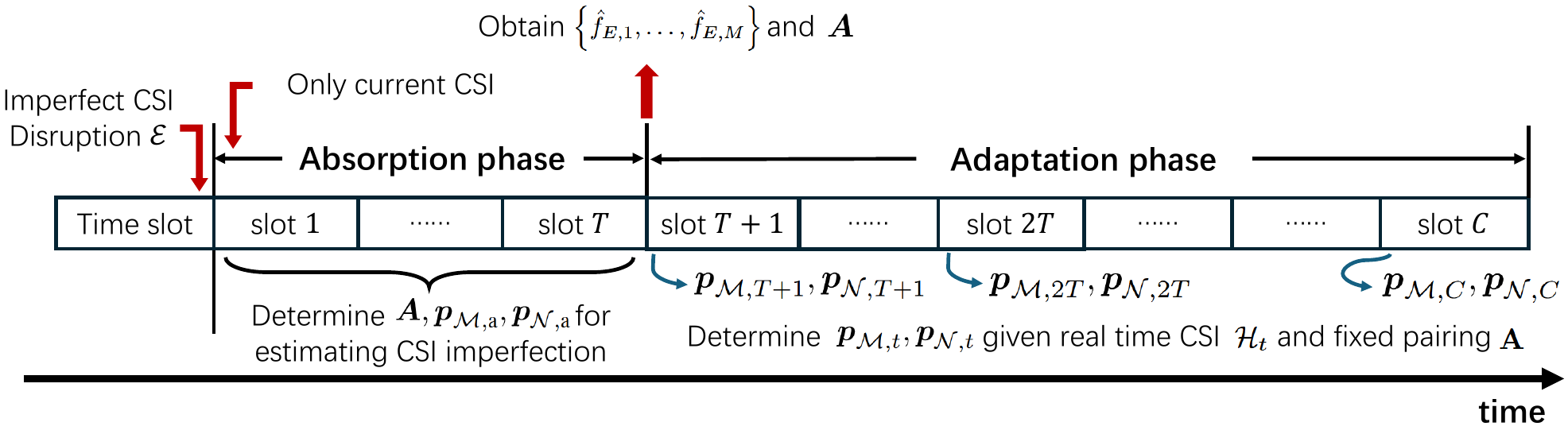}
    \vspace{-10pt}
	\caption{\small{The proposed two-phase resilient design for solving the bi-level problem \eqref{opt1}.}}
 \label{timescale_two_phase}
 \vspace{-15pt}
\end{figure*}

\vspace{-8pt}
\subsection{Adaptation Capability}
Given the estimator $\hat{f}_{E,m}$ in \eqref{hat pdf}, the \ac{RSU} can solve \eqref{c1-2} after time slot $t = T$, for a given matching $\boldsymbol{A} $. However, $\boldsymbol{A} $ should be determined at $t = 1$ for optimizing the upper-level problem \eqref{opt1} over an extended period of $C$ time slots. This is challenging since $J_t(\boldsymbol{p}^{*}_{\mathcal{M},t}, \boldsymbol{p}^{*}_{\mathcal{N},t}, \boldsymbol{A} ), \forall t > 1$ in \eqref{obj1} is unknown at $t = 1$. We observe that $J_t(\boldsymbol{p}_{\mathcal{M},t}, \boldsymbol{p}_{\mathcal{N},t}, \boldsymbol{A} )$ is fundamentally determined by the accuracy of estimated \ac{PDF}. Thus, a more accurate estimation on $\left\{\hat{f}_{E,1}, \ldots, \hat{f}_{E,M}\right\}$ yields lower $\sum_{t=1}^{C} J_t(\boldsymbol{p}_{\mathcal{M},t}, \boldsymbol{p}_{\mathcal{N},t}, \boldsymbol{A} )$. Hence, we define the overall \ac{MSE} of the \ac{PDF} estimators, i.e., $ \sum_{m \in \mathcal{M}} \mathbb{E}\left[  (f_{E,m} - \hat{f}_{E,m})^2 \right]$, as the \emph{adaptation capability} of the \ac{C-V2X} network, whose upper bound is derived.
\begin{theorem}
\label{theorem 1}
An upper bound on the \ac{MSE} of $\hat{f}_{E,m}$ is given by:
\begin{equation}
\small
\label{bound}
    \begin{aligned}
        &\mathbb{E}\left[  (f_{E,m} - \hat{f}_{E,m})^2 \right]
        \leq \frac{1}{4\pi^2} \left( \int_{w\geq |K\pi|} e^{jwe_{nm}} F\left\{ f_{E,m} \right\} dw \right)^2 \\
        & + \frac{K^2}{4T}\left[ \sqrt{1+ \beta_m^2 o_{nm}^2} + \frac{\ln{\left(\beta_m o_{nm} + \sqrt{1+ \beta_m^2 o_{nm}^2}\right)}}{\beta_m o_{nm}} \right]^2,
    \end{aligned}
\end{equation}
where $ \beta_m = K \pi (1-\delta_m^2)$ and $o_{nm} = \frac{p^{\textrm{V}}_{m,\textrm{a}} L^{\textrm{V}}_{m, \textrm{a}}}{p^{\textrm{I}}_{n,\textrm{a}} L^{\textrm{I}}_{nm,\textrm{a}}}$.
\begin{proof}
See the proof in the conference version \cite{shui2024resilienceperspectivecv2xcommunication}.
\end{proof}
\end{theorem}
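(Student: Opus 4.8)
The plan is to prove \eqref{bound} by a bias--variance decomposition of the mean square error, treating $\hat{f}_{E,m}$ in \eqref{hat pdf} as an empirical-characteristic-function deconvolution estimator. I would write $\mathbb{E}[(f_{E,m}-\hat{f}_{E,m})^2] = (f_{E,m}-\mathbb{E}[\hat{f}_{E,m}])^2 + \mathrm{Var}(\hat{f}_{E,m})$ and then identify the first summand of \eqref{bound} with the squared bias and the second with an upper bound on the variance.

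For the bias, I use that the empirical characteristic function $\frac{1}{T}\sum_k e^{-jwz_{m,k}}$ is unbiased for $F\{f_Z\}(w)$. Pushing the expectation through the $w$-integral in \eqref{hat pdf} and invoking $F\{f_Z\}/F\{f_Y\} = F\{f_{E,m}\}$ gives $\mathbb{E}[\hat{f}_{E,m}] = \frac{1}{2\pi}\int_{-K\pi}^{K\pi} e^{jwe_{nm}} F\{f_{E,m}\}(w)\,dw$. Because the untruncated inverse transform returns $f_{E,m}$ exactly, the bias is precisely the discarded tail $\frac{1}{2\pi}\int_{|w|\geq K\pi} e^{jwe_{nm}} F\{f_{E,m}\}(w)\,dw$, whose square is the first term of \eqref{bound}.

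For the variance, the key move is to rewrite the estimator as an average of i.i.d. contributions, $\hat{f}_{E,m} = \frac{1}{T}\sum_{k=1}^{T}\psi(z_{m,k})$ with $\psi(z) = \frac{1}{2\pi}\int_{-K\pi}^{K\pi} e^{jw(e_{nm}-z)}(1+jw/\lambda_Y)\,dw$. Since the $z_{m,k}$ are i.i.d., $\mathrm{Var}(\hat{f}_{E,m}) = T^{-1}\mathrm{Var}(\psi(Z)) \leq T^{-1}\mathbb{E}[|\psi(Z)|^2]$, so it suffices to bound $|\psi(z)|$ uniformly in $z$. The triangle inequality together with $|e^{jw(e_{nm}-z)}|=1$ reduces this to $|\psi(z)| \leq \frac{1}{2\pi}\int_{-K\pi}^{K\pi}\sqrt{1+w^2/\lambda_Y^2}\,dw = \frac{1}{\pi}\int_0^{K\pi}\sqrt{1+w^2/\lambda_Y^2}\,dw$.

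The main obstacle — and the source of the somewhat unusual closed form in \eqref{bound} — is evaluating this last integral and re-expressing it in the problem's parameters. After the substitution $u = w/\lambda_Y$ and the identity $\int\sqrt{1+u^2}\,du = \frac{1}{2}[u\sqrt{1+u^2}+\ln(u+\sqrt{1+u^2})]$, I would insert $1/\lambda_Y = (1-\delta_m^2)o_{nm}$, so that the upper limit becomes $K\pi/\lambda_Y = \beta_m o_{nm}$ with $\beta_m = K\pi(1-\delta_m^2)$. This collapses the bound to $|\psi(z)| \leq \frac{K}{2}\bigl[\sqrt{1+\beta_m^2 o_{nm}^2} + (\beta_m o_{nm})^{-1}\ln(\beta_m o_{nm}+\sqrt{1+\beta_m^2 o_{nm}^2})\bigr]$; squaring and dividing by $T$ yields the variance term and completes the bound. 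The remaining care is purely bookkeeping: handling the imaginary parts correctly (bounding the variance of the real estimator by $\mathbb{E}[|\psi|^2]$ via $|\mathrm{Re}\,\psi|\leq|\psi|$) and using the truncation constant $K$ to guarantee convergence of the integrals.
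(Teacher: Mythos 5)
Your proposal is correct and follows essentially the same route as the paper's proof: a bias--variance decomposition in which the bias equals the truncated tail $\frac{1}{2\pi}\int_{|w|\geq K\pi} e^{jwe_{nm}} F\{f_{E,m}\}\,dw$ (via unbiasedness of the empirical characteristic function), and the variance is bounded using the i.i.d.\ structure, the modulus (triangle) inequality $\bigl|\int g\bigr|\leq\int |g|$, and the closed-form evaluation of $\int_{-K\pi}^{K\pi}\sqrt{1+w^2/\lambda_Y^2}\,dw$ rewritten in terms of $\beta_m o_{nm}=K\pi/\lambda_Y$. Your algebra checks out, including the identity $\lambda_Y\beta_m o_{nm}=K\pi$ that collapses the integral to the stated form, so nothing further is needed.
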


Theorem \ref{theorem 1} shows that the adaptation capability of the \ac{C-V2X} network is upper bounded by the sum of two terms: the first one related to the unknown error distribution $\mathcal{E}_{nm}$ and the second one as a function of the absorption power scheme $p^{\textrm{V}}_{m,\textrm{a}}$ and $p^{\textrm{I}}_{n,\textrm{a}}$. Actually, Theorem \ref{theorem 1} provides an alternative objective function to the upper-level problem in \eqref{opt1}, without need for future \ac{CSI}. Specifically, we can set a high $K$ such that the first term in \eqref{bound} is negligible. Then, minimizing the original objective function $\sum_{t=1}^{C} J_t(\boldsymbol{p}^{*}_{\mathcal{M},t}, \boldsymbol{p}^{*}_{\mathcal{N},t}, \boldsymbol{A} )$ is equivalent to minimizing the second term in \eqref{bound}, where only the large-scale fading parameters $L^{\textrm{V}}_{m,\textrm{a}}$ and $L^{\textrm{I}}_{nm,\textrm{a}}$ during absorption are needed. From \eqref{bound}, we can further observe that the second term is monotonously increasing with $o_{nm}$. Thus, a high power $p^{\textrm{I}}_{n,\textrm{a}}$ on the \ac{V2I} link can enhance the \ac{C-V2X}'s adaptation capability by obtaining an accurate estimation on $f_{E,m}$. This is because the error $e_{nm}$ becomes the dominant component in $Z$ when a high power $p^{\textrm{I}}_{n,\textrm{a}}$ is applied. Conversely, employing a high power $p^{\textrm{V}}_{m,\textrm{a}}$ on the \ac{V2V} link will compromise the \ac{C-V2X}'s adaptation capability, since $Y$ becomes the dominant component in $Z$ other than $e_{nm}$, which, in turn, decreases the accuracy of the estimation $\hat{f}_{E,m}$. Another insight about enhancing the \ac{C-V2X}'s adaptation capability is that the matching of \ac{V2I} link and \ac{V2V} links $\boldsymbol{A} $ should be carefully designed, since the parameters $L^{\textrm{V}}_{m,\textrm{a}}$, $L^{\textrm{I}}_{nm,\textrm{a}}$, and $\delta_m$ will affect the adaptation capability. Moreover, we can see that, as the absorption phase lasts longer, i.e., a higher $T$ is allowed, the system's adaptation capability can be improved. 

As shown in Theorem \ref{theorem 1}, the design of the absorption power scheme $p^{\textrm{V}}_{m,\textrm{a}}$ and $p^{\textrm{I}}_{n,\textrm{a}}$ reveals a tradeoff between the \ac{C-V2X}'s adaptation capability and the \ac{QoS} of \ac{V2V} links during absorption. One may simply implement the minimal $p^{\textrm{V}}_{m,\textrm{a}}$ and the maximal $p^{\textrm{I}}_{n,\textrm{a}}$ to obtain an accurate estimation $\hat{f}_{E,m}$, which, however, will jeopardize the delay on the \ac{V2V} link $m$. Existing studies in \cite{9382930, 9857930, 10238756, 10213228} fail to address this tradeoff as they primarily focus on a desired \ac{QoS}, i.e., a high adaptation capability, while ignoring the \ac{QoS} during absorption. In a resilient design, the \ac{C-V2X} network is expected to achieve a high adaptation capability, on the condition that the \ac{QoS} during absorption is not significantly compromised. Thus, we need a new metric to capture the \ac{QoS} during absorption and demonstrate the interplay between absorption and adaptation.
\vspace{-8pt}
\subsection{\Ac{HR} during absorption}

From a resilience perspective, we adopt the concept of \emph{hazard rate} \cite{Brody2007} to evaluate the system's \ac{QoS} during the absorption phase. The key idea is that, while some short-term \ac{QoS} degradation during absorption may be acceptable if it enables high adaptation capability on the long run, it is equally important to limit this degradation to prevent severe consequences. As a simple example, if the \ac{V2V} delay requirement is $\tau_0 = 10$~ms, a system that experiences delay within $[10,20]$~ms is preferable to one where the delay fluctuates in the range of $[30,40]$~ms.
Formally, given the \ac{QoS} requirement $\tau_0$ on the \ac{V2V} links, the \ac{HR} is defined as
\begin{equation}
\label{hr v2v}
    \Lambda(\tau_0) \triangleq \lim_{\Delta \tau \rightarrow 0} \frac{\mathbb{P} \left\{ \tau_0 \leq \tau \leq \tau_0 + \Delta \tau \right\}}{\Delta \tau \mathbb{P} \left\{ \tau \geq \tau_0 \right\}},
\end{equation}
where the probability is taken with respect to the the small-scale fading during absorption.
By rewriting the definition in \eqref{hr v2v} as $\Lambda(\tau_0) = \lim_{\Delta \tau \rightarrow 0} \frac{\mathbb{P} \left\{ \tau_0 \leq \tau \leq \tau_0 + \Delta \tau | \tau_0 \leq \tau \right\}}{\Delta \tau }$, \ac{HR} actually quantifies the system’s capability to sustain \ac{QoS} close to the specified \ac{QoS} requirements, given that the \ac{QoS} requirement has already been violated. 
In other words, conditional on the \ac{QoS} requirement not being satisfied, a high \ac{HR} during absorption  ensures a high probability of maintaining the \ac{QoS} near the specified requirement. Essentially, a high \ac{HR} $\Lambda(\tau_0)$ implies an increased likelihood of preserving $\tau$ close to $\tau_0$ when $\tau > \tau_0$. The explicit expression of \eqref{hr v2v} is derived next.
\begin{lemma}
\label{lemma 1}
If \ac{V2V} link $m$ is reusing the \ac{RB} of \ac{V2I} link $n$, i.e., $\alpha_{mn} = 1$, the \ac{HR} $\Lambda_m$ on \ac{V2V} link $m$ is given by:
\begin{equation}
\vspace{-5pt}
\label{HR v2v}
\begin{aligned}
    \Lambda_{m} = D_{\textrm{V}} e^{-\frac{\sigma^2\gamma_{\textrm{V}}}{p^{\textrm{V}}_{m,\textrm{a}}L^{\textrm{V}}_{m,\textrm{a}}}} \frac{ \frac{p^{\textrm{I}}_{n,\textrm{a}}L^{\textrm{I}}_{nm,\textrm{a}}}{p^{\textrm{V}}_{m,\textrm{a}}L^{\textrm{V}}_{m,\textrm{a}}} + \frac{\sigma^2}{p^{\textrm{V}}_{m,\textrm{a}} L^{\textrm{V}}_{m,\textrm{a}}} \left( 1 + \frac{p^{\textrm{I}}_{n,\textrm{a}}L^{\textrm{I}}_{nm,\textrm{a}}}{p^{\textrm{V}}_{m,\textrm{a}}L^{\textrm{V}}_{m,\textrm{a}}} \gamma_{\textrm{V}} \right)  }{\left( 1 + \frac{p^{\textrm{I}}_{n,\textrm{a}}L^{\textrm{I}}_{nm,\textrm{a}}}{p^{\textrm{V}}_{m,\textrm{a}}L^{\textrm{V}}_{m,\textrm{a}}} \gamma_{\textrm{V}} - e^{-\frac{\sigma^2\gamma_{\textrm{V}}}{p^{\textrm{V}}_{m,\textrm{a}}L^{\textrm{V}}_{m,\textrm{a}}}} \right)^2 },
\end{aligned}
\end{equation}
where $D_{\textrm{V}} = \frac{\ln{2}D2^{\frac{D}{B\tau_0}}}{B\tau_0^2} $.
\end{lemma}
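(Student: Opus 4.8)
The plan is to exploit the standard interpretation of the \ac{HR} in \eqref{hr v2v} as the ratio of the \ac{PDF} of the \ac{V2V} delay $\tau$ evaluated at the threshold $\tau_0$ to the survival probability $\mathbb{P}\{\tau \geq \tau_0\}$. Since $\lim_{\Delta\tau\to 0}\mathbb{P}\{\tau_0 \leq \tau \leq \tau_0+\Delta\tau\}/\Delta\tau = f_\tau(\tau_0)$, the definition collapses to $\Lambda_m = f_\tau(\tau_0)/\mathbb{P}\{\tau \geq \tau_0\}$, so the whole proof reduces to computing these two quantities under the true small-scale fading realized during absorption.

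First, I would convert the delay condition into an \ac{SINR} condition. Because $\tau_m = D/\bigl(B\log_2(1+\gamma^{\textrm{V}}_m)\bigr)$ is strictly decreasing in $\gamma^{\textrm{V}}_m$, the event $\{\tau \geq \tau_0\}$ coincides with $\{\gamma^{\textrm{V}}_m \leq \gamma_{\textrm{V}}\}$ for the threshold $\gamma_{\textrm{V}} = 2^{D/(B\tau_0)}-1$. With $\alpha_{mn}=1$ as the single active interferer, the true \ac{SINR} during absorption is $\gamma^{\textrm{V}}_m = \frac{p^{\textrm{V}}_{m,\textrm{a}}L^{\textrm{V}}_{m,\textrm{a}}|g^{\textrm{V}}_m|^2}{p^{\textrm{I}}_{n,\textrm{a}}L^{\textrm{I}}_{nm,\textrm{a}}|g^{\textrm{I}}_{nm}|^2 + \sigma^2}$, and the Rayleigh assumption $g\sim\mathcal{CN}(0,1)$ makes $|g^{\textrm{V}}_m|^2$ and $|g^{\textrm{I}}_{nm}|^2$ independent unit-mean exponential \acp{RV}.

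Next, I would derive the distribution of $\gamma^{\textrm{V}}_m$. Rewriting $\{\gamma^{\textrm{V}}_m \leq \gamma\}$ as $\{|g^{\textrm{V}}_m|^2 \leq \frac{p^{\textrm{I}}_{n,\textrm{a}}L^{\textrm{I}}_{nm,\textrm{a}}}{p^{\textrm{V}}_{m,\textrm{a}}L^{\textrm{V}}_{m,\textrm{a}}}\gamma\,|g^{\textrm{I}}_{nm}|^2 + \frac{\sigma^2}{p^{\textrm{V}}_{m,\textrm{a}}L^{\textrm{V}}_{m,\textrm{a}}}\gamma\}$ and conditioning on $|g^{\textrm{I}}_{nm}|^2$, the inner probability is simply an exponential \ac{CDF}; averaging the resulting expression over the exponential $|g^{\textrm{I}}_{nm}|^2$ reduces to a single elementary integral. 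This yields the closed form $\mathbb{P}\{\gamma^{\textrm{V}}_m \leq \gamma\} = 1 - \frac{e^{-\sigma^2\gamma/(p^{\textrm{V}}_{m,\textrm{a}}L^{\textrm{V}}_{m,\textrm{a}})}}{1 + \frac{p^{\textrm{I}}_{n,\textrm{a}}L^{\textrm{I}}_{nm,\textrm{a}}}{p^{\textrm{V}}_{m,\textrm{a}}L^{\textrm{V}}_{m,\textrm{a}}}\gamma}$, from which $\mathbb{P}\{\tau \geq \tau_0\}$ follows by substituting $\gamma = \gamma_{\textrm{V}}$.

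Finally, I would obtain the numerator $f_\tau(\tau_0)$ by differentiating the survival function through the chain rule $f_\tau(\tau_0) = -g_{\gamma}(\gamma_{\textrm{V}})\,\frac{d\gamma_{\textrm{V}}}{d\tau_0}$, where $g_\gamma$ is the density obtained by differentiating the \ac{CDF} above and $\frac{d\gamma_{\textrm{V}}}{d\tau_0} = -\frac{\ln 2\,D\,2^{D/(B\tau_0)}}{B\tau_0^2} = -D_{\textrm{V}}$; this is exactly where the constant $D_{\textrm{V}}$ in \eqref{HR v2v} originates. Dividing this density by the survival probability and collecting terms then delivers \eqref{HR v2v}. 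The step I expect to be the main obstacle is the final bookkeeping: the quotient-rule differentiation of the ratio $e^{-u(\gamma)}/v(\gamma)$, with $u(\gamma)=\sigma^2\gamma/(p^{\textrm{V}}_{m,\textrm{a}}L^{\textrm{V}}_{m,\textrm{a}})$ and $v(\gamma)=1+\frac{p^{\textrm{I}}_{n,\textrm{a}}L^{\textrm{I}}_{nm,\textrm{a}}}{p^{\textrm{V}}_{m,\textrm{a}}L^{\textrm{V}}_{m,\textrm{a}}}\gamma$, produces two additive contributions that must be gathered into the bracketed numerator of \eqref{HR v2v}, while the surviving powers of the factor $1+\frac{p^{\textrm{I}}_{n,\textrm{a}}L^{\textrm{I}}_{nm,\textrm{a}}}{p^{\textrm{V}}_{m,\textrm{a}}L^{\textrm{V}}_{m,\textrm{a}}}\gamma_{\textrm{V}}$ must be tracked consistently against the denominator of the survival probability when the ratio $f_\tau(\tau_0)/\mathbb{P}\{\tau\geq\tau_0\}$ is formed.
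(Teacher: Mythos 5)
Your proposal is correct and is essentially the paper's own argument: the paper likewise converts the delay event into the \ac{SINR}-threshold event $\{\gamma^{\textrm{V}}_m \leq \gamma_{\textrm{V}}\}$, derives the same Rayleigh/exponential closed form $\mathbb{P}\{\gamma^{\textrm{V}}_m \geq \gamma\} = e^{-b\gamma}/(1+a\gamma)$ (with $a = p^{\textrm{I}}_{n,\textrm{a}}L^{\textrm{I}}_{nm,\textrm{a}}/(p^{\textrm{V}}_{m,\textrm{a}}L^{\textrm{V}}_{m,\textrm{a}})$ and $b = \sigma^2/(p^{\textrm{V}}_{m,\textrm{a}}L^{\textrm{V}}_{m,\textrm{a}})$), and then resolves the resulting $0/0$ limit by L'H\^{o}pital's rule in $\Delta\tau$, which is exactly your chain-rule identity $f_\tau(\tau_0) = D_{\textrm{V}} f_{\gamma}(\gamma_{\textrm{V}})$ divided by the survival probability, with $D_{\textrm{V}}$ arising from $d\gamma_{\textrm{V}}/d\tau_0 = -D_{\textrm{V}}$ in both treatments. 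One warning about the ``final bookkeeping'' you flagged as the main obstacle: carried out exactly, your route (and the paper's L'H\^{o}pital step) yields the denominator $(1+a\gamma_{\textrm{V}})\left(1+a\gamma_{\textrm{V}}-e^{-b\gamma_{\textrm{V}}}\right)$, because the factor $(1+a\gamma_{\textrm{V}})^2$ from differentiating the \ac{CCDF} only partially cancels against the survival probability $\left(1+a\gamma_{\textrm{V}}-e^{-b\gamma_{\textrm{V}}}\right)/(1+a\gamma_{\textrm{V}})$, so the algebra does not land on \eqref{HR v2v} as printed, whose denominator is $\left(1+a\gamma_{\textrm{V}}-e^{-b\gamma_{\textrm{V}}}\right)^2$; this discrepancy lies in the paper's final collection of terms, not in your plan.
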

\begin{proof}
See Appendix \ref{Proof of Lemma 1}.
\end{proof}
Aligned with resilience, \ac{HR} is not intended to strictly prevent \ac{QoS} degradation, but rather to mitigate the degradation's severity. Thus, Lemma \ref{lemma 1} provides a guidance on the design of the absorption power scheme $p^{\textrm{V}}_{m,\textrm{a}}$ and $p^{\textrm{I}}_{n,\textrm{a}}$. Precisely, a high \ac{HR} can keep an acceptable \ac{QoS} degradation during absorption while enhancing the \ac{C-V2X} network's adaptation capability.

\vspace{-10pt}
\subsection{matching and Power Optimization during absorption}
\begin{figure}[t]
	\centering
	\includegraphics[width=0.32\textwidth]{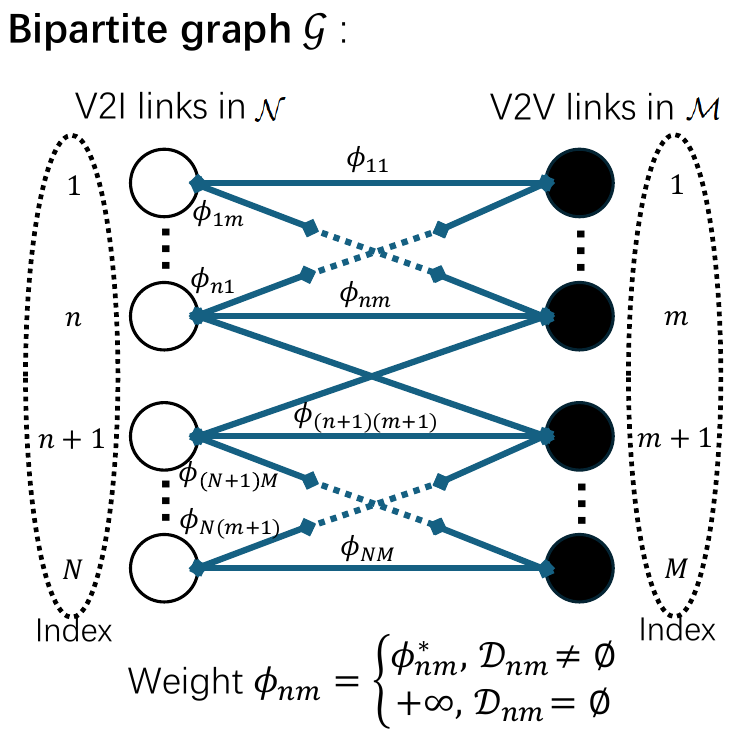}
    \vspace{-8pt}
	\caption{\small{The bipartite graph $\mathcal{G} = (\mathcal{M} \times \mathcal{N}, \mathcal{E}_{\mathcal{G}})$ used for solving \eqref{opt2}.}}
 \label{Hungarian}
     \vspace{-10pt}
\end{figure}
Given the results in \eqref{bound} and \eqref{HR v2v}, we can reformulate the upper-level problem of \eqref{opt1}:
\begin{subequations}
\label{opt2}
\begin{IEEEeqnarray}{s,rCl'rCl'rCl}
& \underset{\boldsymbol{p}_{\mathcal{M},\textrm{a}}, \boldsymbol{p}_{\mathcal{N}, \textrm{a}}, \boldsymbol{A} }{\text{min}} &\quad& \sum_{m \in \mathcal{M}} \mathbb{E}\left[  (f_{E,m} - \hat{f}_{E,m})^2 \right] \label{obj2}\\
&\text{s.t.} && \eqref{c1-1}, \nonumber \\
&&& \Lambda_{m} \geq \lambda_m \Lambda_{m,\textrm{max}} , \forall m \in \mathcal{M} \label{c2-2}, \\
&&& p^{\textrm{V}}_{\textrm{min}} \leq p^{\textrm{V}}_{m,\text{a}} \leq p^{\textrm{V}}_{\textrm{max}}, \forall m \in \mathcal{M} \label{c2-3}, \\
&&& p^{\textrm{I}}_{\textrm{min}} \leq p^{\textrm{I}}_{n,\text{a}} \leq p^{\textrm{I}}_{\textrm{max}}, \forall n \in \mathcal{N} \label{c2-4}, 
\end{IEEEeqnarray}
\end{subequations}
where $\boldsymbol{p}_{\mathcal{M},\textrm{a}} $, $\boldsymbol{p}_{\mathcal{N}, \textrm{a}}$, and matching $\boldsymbol{A} $ are optimized during the absorption phase. 
In \eqref{c2-2}, $\boldsymbol{\lambda} = \left[ \lambda_1, \ldots, \lambda_M \right]$ are predefined by the \ac{RSU} to balance the \ac{C-V2X} network's prioritization on its adaptation capability and \ac{QoS} during absorption. A higher $\lambda_m$ ensures that the \ac{QoS} of \ac{V2V} link $m$ is less compromised during absorption, however, at the expense of a lower accuracy of estimation $\hat{f}_{E,m}$. To solve \eqref{opt2}, we approximate $\Lambda_{m} \approx \frac{o_{nm} D_{\textrm{V}}}{\gamma_{\textrm{V}}^2 }$ by ignoring the noise term $\sigma^2$ in \eqref{HR v2v}. This approximation is helpful in addressing the non-convexity of \eqref{HR v2v}. Thus, we have $\Lambda_{m,\textrm{max}} \approx \frac{D_{\textrm{V}} p^{\textrm{V}}_{\textrm{max}} L^{\textrm{V}}_{m,\textrm{a}} }{\gamma_{\textrm{V}}^2 p^{\textrm{I}}_{\textrm{min}} L^{\textrm{I}}_{nm,\textrm{a}}}$ under constraints \eqref{c2-3} and \eqref{c2-4}. 
\begin{figure}[t]
	\centering
	\includegraphics[width=0.40\textwidth]{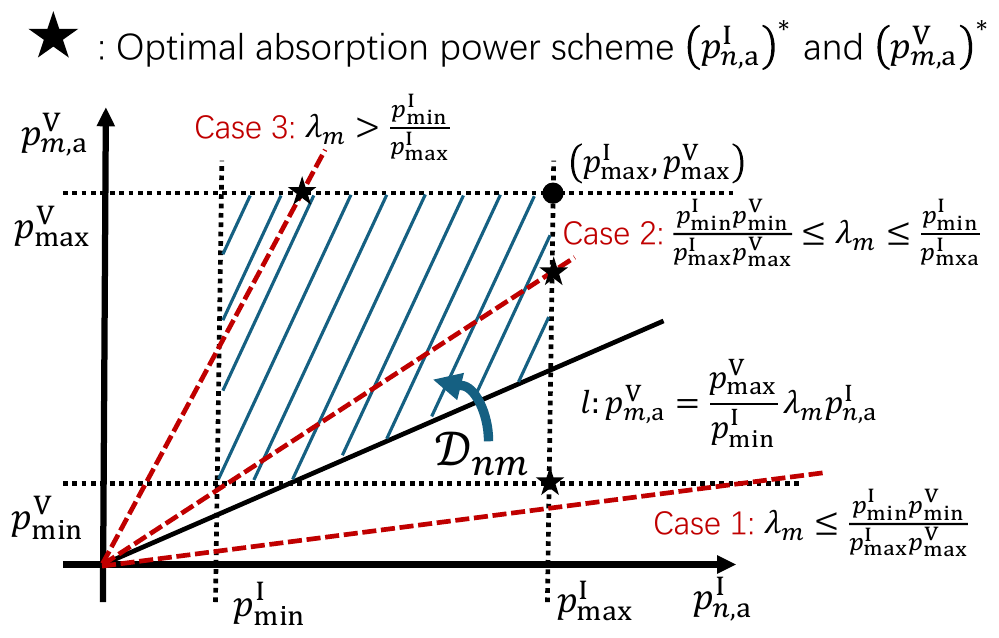}
    \vspace{-8pt}
	\caption{\small{The different cases of optimal absorption power scheme.}}
 \label{optimal power a}
      \vspace{-8pt}
\end{figure}
Now, \eqref{opt2} can be transformed into a minimum weight matching problem. Specifically, we can construct a bipartite graph $\mathcal{G} = (\mathcal{M} \times \mathcal{N}, \mathcal{E}_{\mathcal{G}})$ where $\mathcal{E}_{\mathcal{G}}$ is the set of edges that connect to the vertices (vehicular links) from set $\mathcal{M}$ and $\mathcal{N}$, as shown in Fig. \ref{Hungarian}. The weight of the edge that connects $n \in \mathcal{N}$ and $m \in \mathcal{M}$ is given by:
\begin{equation}
\label{weight}
\phi_{nm} =
\begin{cases}
    \phi^{*}_{nm}, & \mathcal{D}_{nm} \neq \varnothing, \\
    + \infty, & \text{otherwise,}
\end{cases}
\end{equation}
where $\phi^{*}_{nm}$ and $\mathcal{D}_{nm}$ are given in \eqref{phi} and \eqref{phi feasible}.
\begin{figure*}
\vspace{-10pt}
    \begin{equation}
\label{phi}
    \phi^{*}_{nm} = \underset{ (p^{\textrm{V}}_{m,\text{a}}, p^{\textrm{I}}_{n,\text{a}}) \in \mathcal{D}_{nm}}{\text{min}} \left[ \sqrt{1+ \beta_m^2 o_{nm}^2} + \frac{\ln{\left(\beta_m o_{nm} + \sqrt{1+ \beta_m^2 o_{nm}^2}\right)}}{\beta_m o_{nm}} \right]^2,
\end{equation}
\begin{equation}
\label{phi feasible}
\mathcal{D}_{nm} = 
\left\{ (p^{\textrm{V}}_{m,\text{a}}, p^{\textrm{I}}_{n,\text{a}}) \middle|
\begin{array}{l}
\frac{p^{\textrm{V}}_{\textrm{max}}}{ p^{\textrm{I}}_{\textrm{min}}}  \lambda_m \leq \frac{p^{\textrm{V}}_{m,\textrm{a}}}{p^{\textrm{I}}_{n,\textrm{a}} } , \
p^{\textrm{V}}_{\textrm{min}} \leq p^{\textrm{V}}_{m,\text{a}} \leq p^{\textrm{V}}_{\textrm{max}}, \ \text{and} \
p^{\textrm{I}}_{\textrm{min}} \leq p^{\textrm{I}}_{n,\text{a}} \leq p^{\textrm{I}}_{\textrm{max}}
\end{array} 
\right\}. 
\end{equation}
\vspace{-20pt}
\end{figure*}
In \eqref{phi}, the original function $\mathbb{E}\left[  (f_{E,m} - \hat{f}_{E,m})^2 \right]$ is replaced by the second term of its upper bound in Theorem 1 and \eqref{phi feasible} is directly obtained from \eqref{c2-2}, \eqref{c2-3}, and \eqref{c2-4}. Since the objective function in \eqref{phi} is monotonously increasing with $o_{nm} = \frac{p^{\textrm{V}}_{m,\textrm{a}} L^{\textrm{V}}_{m,\textrm{a}}}{p^{\textrm{I}}_{n,\textrm{a}} L^{\textrm{I}}_{nm,\textrm{a}}}$, we can first find $o_{nm}$ that minimizes \eqref{phi} in $\mathcal{D}_{nm}$. Then, we will obtain multiple optimal absorption transmit power scheme $(p^{\textrm{V}}_{m,\text{a}})^{*}$ and $(p^{\textrm{I}}_{n,\text{a}})^{*}$ since $o_{nm}$ is only determined by the ratio of $p^{\textrm{I}}_{n,\text{a}}$ and $p^{\textrm{V}}_{m,\text{a}}$. To back up our approximation of \ac{HR} by ignoring noise, we choose the optimal solution with the highest transmit power, which is given by:
\begin{equation}
\label{power scheme}
\left( (p^{\textrm{I}}_{n,\text{a}})^{*}, (p^{\textrm{V}}_{m,\text{a}})^{*}\right)  =
\begin{cases}
    \left(p^{\textrm{I}}_{\textrm{max}}, p^{\textrm{V}}_{\textrm{min}} \right), 
    &  \lambda_m \leq \frac{p^{\textrm{I}}_{\textrm{min}} p^{\textrm{V}}_{\textrm{min}}}{p^{\textrm{I}}_{\textrm{max}} p^{\textrm{V}}_{\textrm{max}}}, \\
    \left(p^{\textrm{I}}_{\textrm{max}}, \frac{p^{\textrm{I}}_{\textrm{max}} p^{\textrm{V}}_{\textrm{max}} \lambda_m }{p^{\textrm{I}}_{\textrm{min}}} \right),
    & \frac{p^{\textrm{I}}_{\textrm{min}} p^{\textrm{V}}_{\textrm{min}}}{p^{\textrm{I}}_{\textrm{max}} p^{\textrm{V}}_{\textrm{max}}} < \lambda_m \leq \frac{p^{\textrm{I}}_{\textrm{min}}}{p^{\textrm{I}}_{\textrm{max}}}, \\
    \left(\frac{p^{\textrm{I}}_{\textrm{max}}}{\lambda_m} , p^{\textrm{V}}_{\textrm{max}} \right),
    & \frac{p^{\textrm{I}}_{\textrm{min}}}{p^{\textrm{I}}_{\textrm{max}}} < \lambda_m.
\end{cases}
\end{equation}
The different cases in \eqref{power scheme} are illustrated in Fig. \ref{optimal power a} with the complete algorithm for solving $\eqref{opt2}$ given in Algorithm \ref{algo1}. Although the closed-form solutions provided in \eqref{power scheme} are only sub-optimal with respect to the original upper-level problem in \eqref{opt1}, they represent the best attainable solutions given that the matching $\boldsymbol{A}$ must be determined at time slot $t = 1$. Moreover, the \ac{RSU} can exploit \eqref{power scheme} to accurately estimate the \ac{PDF}, thereby reducing the original objective function in \eqref{opt1}.

\vspace{-10pt}
\section{Adaptation Phase}
\label{adaptation section}
After obtaining the matching $\boldsymbol{A} $ and the estimated \ac{PDF} in the absorption phase, the \ac{C-V2X} network switches to the adaptation phase. During adaptation, the transmit power schemes $\boldsymbol{p}_{\mathcal{M},t}$ and $\boldsymbol{p}_{\mathcal{N},t}$ are optimized at time slot $t$ for satisfying the \ac{QoS} requirements according to real-time imperfect \ac{CSI} $\hat{\mathcal{H}}_t$. Particularly, with $\boldsymbol{A} $ given in \eqref{opt2}, the lower-level problem in \eqref{c1-2} can be transformed into $M$ individual sub-problems, where the $m$-th sub-problem focuses on \ac{V2V} link $m$ and its matched \ac{V2I} link. We consider a general case where \ac{V2V} link $m$ is reusing the \ac{RB} of \ac{V2I} link $n$. Then, the $m$-th sub-problem at time slot $t$ is given by:
\begin{subequations}
\label{opt4}
\begin{IEEEeqnarray}{s,rCl'rCl'rCl}
& \underset{p^{\textrm{V}}_{m,t}, p^{\textrm{I}}_{n,t}}{\text{min}} &\quad& J_{m,t}(p^{\textrm{V}}_{m,t}, p^{\textrm{I}}_{n,t})  \label{obj4}\\
&\text{s.t.} && R_{n,t}( p^{\textrm{V}}_{m,t}, p^{\textrm{I}}_{n,t}) \geq R_0 \label{c4-1}, \\
&&& P_{m,t}(\hat{\mathcal{E}}_{nm}, p^{\textrm{V}}_{m,t}, p^{\textrm{I}}_{n,t}) \geq P_0 \label{c4-2}, \\
&&& p^{\textrm{V}}_{\textrm{min}} \leq p^{\textrm{V}}_{m,t} \leq p^{\textrm{V}}_{\textrm{max}} \ \text{and} \ p^{\textrm{I}}_{\textrm{min}} \leq p^{\textrm{I}}_{n,t} \leq p^{\textrm{I}}_{\textrm{max}} \label{c4-3},
\end{IEEEeqnarray}
\end{subequations}
where $ R_{n,t}( p^{\textrm{V}}_{m,t}, p^{\textrm{I}}_{n,t})$ and $P_{m,t}(\hat{\mathcal{E}}_{nm}, p^{\textrm{V}}_{m,t}, p^{\textrm{I}}_{n,t})$ are the throughput on \ac{V2I} link $n$ and the probability of satisfying the delay on \ac{V2V} link $m$ in the considered case. For notational simplicity, we replace $P_{m,t}(\hat{\mathcal{E}}_{nm}, p^{\textrm{V}}_{m,t}, p^{\textrm{I}}_{n,t}) $ by $\hat{P}_{m,t}^{(nm)}$ and $P_{m,t}(\mathcal{E}_{nm}, p^{\textrm{V}}_{m,t}, p^{\textrm{I}}_{n,t})$ by $P_{m,t}^{(nm)}$ in the following analysis. Then, the objective function \eqref{obj4} is given as $
J_{m,t}(p_{m,t}, p_{n,t}) = \left[\hat{P}_{m,t}^{(nm)} - P_{m,t}^{(nm)} \right]^2$.

To solve \eqref{opt4}, we begin by deriving the expression of \eqref{obj4}:
\begin{equation}
\label{prob analytic}
\begin{aligned}
    P_{m,t}^{(nm)} & = \mathbb{P} \Biggl\{\frac{p^{\textrm{V}}_{m,t} L^{\textrm{V}}_{m,t}}{ \gamma_{\textrm{V}} } (1 - \delta_{m,t}^2) | {e}_{m,t} |^2  \\
    & \quad -  p^{\textrm{I}}_{n,t} L^{\textrm{I}}_{nm,t} {e}_{nm,t} \geq b_{m,t} \Bigm| {e}_{nm,t} \sim \mathcal{E}_{nm}\Biggr\} \\
    & \overset{(a)}{=}  1 - \int_{-\frac{b_t}{p^{\textrm{I}}_{n,t} L^{\textrm{I}}_{nm,t}}}^{\infty}  \Biggl[1 \\ 
    & \quad - \exp{\left( -\frac{\gamma_{\textrm{V}} (b_{m,t}+ p^{\textrm{I}}_{n,t} L^{\textrm{I}}_{nm,t} x)}{p^{\textrm{V}}_{m,t} L^{\textrm{V}}_{m,t} (1-\delta_{m,t}^2) }   \right) }\Biggr] f_{E,m}(x) dx \\
    & \overset{(b)}{\approx} 1 - \int_{-\frac{b_t}{p^{\textrm{I}}_{n,t} L^{\textrm{I}}_{nm,t}}}^{-\frac{b_t}{p^{\textrm{I}}_{n,t} L^{\textrm{I}}_{nm,t}}+ K_1 } \Biggl[1 \\
    & \quad - \exp{\left( -\frac{\gamma_{\textrm{V}} (b_{m,t}+ p^{\textrm{I}}_{n,t} L^{\textrm{I}}_{nm,t} x)}{p^{\textrm{V}}_{m,t} L^{\textrm{V}}_{m,t} (1-\delta_{m,t}^2) }   \right) }\Biggr] f_{E,m}(x) dx \\
    & \overset{(c)}{=} 1 - \int_{-\infty}^{\infty}  \Phi_t(x) f_{E,m}(x) dx \\
    & \overset{(d)}{=} 1 -  \frac{1}{2\pi} \int_{-\infty}^{\infty} \left(F\left\{ \Phi_t \right\} \frac{F^{*}\left\{ f_Z \right\}}{F^{*}\left\{ f_Y \right\}} \right)(w) dw.
\end{aligned}
\end{equation}

In \eqref{prob analytic}, (a) is based on the convolution rule for the \ac{PDF} when summing two independent \ac{RV}. The approximation (b) is adopted for analytical tractability in the following discussion\footnote{This approximation is valid when $K_1$ is large enough since $\underset{x \rightarrow \infty}{\lim} \exp{\left( -\frac{\gamma_{\textrm{V}} (b_{m,t}+ p^{\textrm{I}}_{n,t} L^{\textrm{I}}_{nm,t} x)}{p^{\textrm{V}}_{m,t} L^{\textrm{V}}_{m,t} (1-\delta_{m,t}^2) }   \right) } = 0$ and $\underset{x \rightarrow \infty}{\lim}f_{E,m}(x) = 0$.}. In (c), we define $\Phi_t(x) = \left[1 - \exp{\left( -\frac{\gamma_{\textrm{V}} \left(b_t+ p^{\textrm{I}}_{n,t} L^{\textrm{I}}_{nm,t} x\right)}{p^{\textrm{V}}_{m, t} L^{\textrm{V}}_{m, t} \left(1-\delta_{m,t}^2\right) } \right) } \right] I(x)$ with
\begin{equation}
I(x) = \begin{cases} 
1, & \text{if } -\frac{b_t}{p^{\textrm{I}}_{n,t} L^{\textrm{I}}_{nm,t}} \leq x \leq -\frac{b_{t}}{p^{\textrm{I}}_{n,t} L^{\textrm{I}}_{nm,t}} + K_1, \\
0,  & \text{otherwise }.
\end{cases}
\end{equation}
In (d), we leverage Parseval's theorem.
According to \eqref{prob analytic}, we can replace $F^{*}\left\{ f_Z \right\} $ with its empirical estimation $\frac{1}{T}\sum_{k=1}^T e^{-jwz_k}$ and obtain the expression of $\hat{P}_{m,t}^{(nm)}$ as
\begin{equation}
\label{prob hat}
\begin{aligned}
    \hat{P}_{m,t}^{(nm)} = 1 - \frac{1}{2\pi T}\sum_{k=1}^T \int_{-\infty}^{\infty} F\left\{ \Phi_t \right\} e^{jwz_k}(1-\frac{jw}{\lambda_Y}) dw, 
\end{aligned}
\end{equation}
where
\begin{equation}
\label{F}
\begin{aligned}
    F\left\{ \Phi_t \right\} 
    = & \frac{e^{-jwK_1} - 1}{-jw} e^{jw\frac{b_t}{p^{\textrm{I}}_{n,t} L^{\textrm{I}}_{nm,t}}} \\
    & + \frac{e^{-(c_t+jw)K_1 }- 1}{c_t+jw} e^{ jw\frac{b_t}{p^{\textrm{I}}_{n,t} L^{\textrm{I}}_{nm,t}}} \\
    \approx & \frac{e^{-jwK_1} - 1}{-jw} e^{jw\ell(c_t)} + \frac{e^{-(c_t+jw)K_1 }- 1}{c_t+jw} e^{ jw\ell(c_t)}.
\end{aligned}
\end{equation}
\begin{algorithm}[t]
\footnotesize
\caption{absorption power scheme and matching optimization in absorption phase}
\label{algo1}
\begin{algorithmic}[1] 
\Require Large-scale fading $L^{\textrm{I}}_{nm, \textrm{a}}$, $L^{\textrm{V}}_{m, \textrm{a}}$ and \ac{V2V} link coefficients $\delta_m$, $\forall m \in \mathcal{M}, \forall n \in \mathcal{N}$ and parameters $\boldsymbol{\lambda}$.
\Ensure absorption transmit power scheme $\boldsymbol{p}_{\mathcal{M},\textrm{a}}, \boldsymbol{p}_{\mathcal{N},\textrm{a}}$ and matching $\boldsymbol{A} $ 
\State Find optimal $ (p^{\textrm{I}}_{n,\text{a}})^{*}$, $ (p^{\textrm{V}}_{m,\text{a}})^{*}$ and $o_{nm}^{*}, \forall m \in \mathcal{M}, \forall n \in \mathcal{N}$ based on \eqref{power scheme} \;
\State Assign the value of $\phi_{nm}$ using $ o_{nm}^{*}, \forall m \in \mathcal{M}, \forall n \in \mathcal{N}$ based on \eqref{weight}\;
\State Find $\boldsymbol{A} $, $\boldsymbol{p}_{\mathcal{M},\textrm{a}}$ and $\boldsymbol{p}_{\mathcal{N},\textrm{a}}$ given the bipartite graph $\mathcal{G}$ by Hungarian method\; 
\end{algorithmic}
\end{algorithm}
The approximation in \eqref{F} relies on the assumption that the additive noise $\sigma^2$ in $b_t$ is negligible. In \eqref{F}, we define $c_t \triangleq \frac{ \gamma_{\textrm{V}} p^{\textrm{I}}_{n,t} L^{\textrm{I}}_{nm,t}}{p^{\textrm{V}}_{m,t} L^{\textrm{V}}_{m,t} (1-\delta_m^2)}$ and $\ell(c_t) \triangleq | \hat{g}^{\textrm{I}}_{nm,t} | ^2 - \frac{ | \hat{g}^{\textrm{V}}_{m,t} | ^2}{c_t} \frac{\delta_m^2}{1 - \delta_m^2} $. However, the integral in \eqref{prob hat} may not converge because of the term $1-\frac{jw}{\lambda_Y}$. To address this issue, we leverage the truncated regularization \cite{10.1214/08-AOS652} for the following approximation
\begin{equation}
\begin{aligned}
    \label{prob estimated}
     \hat{P}_{m,t}^{(nm)}
     \approx 1 - \frac{1}{2\pi T}\sum_{k=1}^T \int_{-K_2 \pi}^{K_2 \pi} F\left\{ \Phi_t \right\} e^{jwz_k}(1-\frac{jw}{\lambda_Y}) dw.
\end{aligned}
\end{equation}
Given \eqref{prob analytic} and \eqref{prob estimated}, the following theorem can be obtained
\begin{theorem}
The upper bound on the \ac{MSE} of $\hat{P}_{m,t}^{(nm)}$ is given by:
\begin{equation}
\label{qos upper}
    \begin{aligned}
         &\mathbb{E}\left[ \left( \hat{P}_{m,t}^{(nm)} -  P_{m,t}^{(nm)}\right)^2 \right] \leq \frac{1}{\pi^2T} \left[u(p^{\textrm{I}}_{n,t}, p^{\textrm{V}}_{m,t}) - 1\right]^2 \\
         & + \left( \frac{1}{2\pi} \int_{w\geq |K_2\pi|} F\left\{ \Phi_t \right\} F^{*}\left\{ f_{E,m} \right\} dw \right)^2,
    \end{aligned}
\end{equation}
where $u(p^{\textrm{I}}_{n,t}, p^{\textrm{V}}_{m,t})$ is given by:
\begin{equation}
\small
\label{u}
    \begin{aligned}
        u(p^{\textrm{I}}_{n,t}, p^{\textrm{V}}_{m,t}) = &\sqrt{1+\lambda_{Y}^{-2}K_2^2\pi^2} + \ln{\left(\frac{\sqrt{1+\lambda_{Y}^{-2}K_2^2\pi^2} - 1}{\lambda_{Y}^{-1}K_2\pi}\right)} \\
        & + c_t\lambda_{Y}^{-1} \ln{ \left(\frac{ K_2\pi + \sqrt{c_t^2 + K_2^2\pi^2}}{c_t}\right)} \\
        & + \frac{1}{c_t} \ln{\left( \frac{c_t K_2\pi}{\sqrt{K_2^2\pi^2 + c_t^2} + K_2\pi} \right)}. 
    \end{aligned}
\end{equation}
\begin{proof}
See Appendix \ref{Proof of Proposition 1}.
\end{proof}
\end{theorem}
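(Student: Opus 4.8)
The plan is a standard bias--variance decomposition of the deconvolution estimator $\hat{P}_{m,t}^{(nm)}$ in \eqref{prob estimated}, matched termwise against the two summands of \eqref{qos upper}. Writing $\hat{P}_{m,t}^{(nm)} - P_{m,t}^{(nm)} = (\hat{P}_{m,t}^{(nm)} - \mathbb{E}[\hat{P}_{m,t}^{(nm)}]) + (\mathbb{E}[\hat{P}_{m,t}^{(nm)}] - P_{m,t}^{(nm)})$ and noting that the cross term vanishes (the second bracket is deterministic while the first is centered), one obtains $\mathbb{E}[(\hat{P}_{m,t}^{(nm)} - P_{m,t}^{(nm)})^2] = \mathrm{Var}(\hat{P}_{m,t}^{(nm)}) + (\mathbb{E}[\hat{P}_{m,t}^{(nm)}] - P_{m,t}^{(nm)})^2$. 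I expect the variance to produce the first term $\tfrac{1}{\pi^2 T}[u-1]^2$ and the squared bias to produce the second (tail) term.

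For the bias, I would exploit that $Z = e_{nm}+Y$ is a sum of independents, so $f_Z = f_{E,m}\ast f_Y$ and hence $F^{*}\{f_Z\}/F^{*}\{f_Y\} = F^{*}\{f_{E,m}\}$; since $Y\sim\exp(\lambda_Y)$ gives $1/F^{*}\{f_Y\}(w) = 1 - jw/\lambda_Y$, and since $\mathbb{E}[\tfrac1T\sum_k e^{jwz_k}] = F^{*}\{f_Z\}(w)$, the empirical transform in \eqref{prob estimated} is unbiased. Taking expectations therefore replaces the empirical transform by $F^{*}\{f_Z\}$ and collapses the product to $F\{\Phi_t\}F^{*}\{f_{E,m}\}$ integrated over $[-K_2\pi, K_2\pi]$; subtracting the exact value \eqref{prob analytic}, whose integral runs over all of $\mathbb{R}$, leaves exactly the tail $\tfrac{1}{2\pi}\int_{|w|\ge K_2\pi}F\{\Phi_t\}F^{*}\{f_{E,m}\}\,dw$, which is the second term of \eqref{qos upper}.

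For the variance, I would write $\hat{P}_{m,t}^{(nm)} = 1 - \tfrac1T\sum_{k=1}^T \xi_k$ with the i.i.d. per-sample terms $\xi_k = \tfrac{1}{2\pi}\int_{-K_2\pi}^{K_2\pi}F\{\Phi_t\}(w)(1 - jw/\lambda_Y)e^{jwz_k}\,dw$, which are real by conjugate symmetry of the integrand over the symmetric interval. Then $\mathrm{Var}(\hat{P}_{m,t}^{(nm)}) = \tfrac1T\mathrm{Var}(\xi_1)\le \tfrac1T\mathbb{E}[|\xi_1|^2]$. Since $|e^{jwz_1}| = 1$, the triangle inequality gives the deterministic bound $|\xi_1|\le \tfrac{1}{2\pi}\int_{-K_2\pi}^{K_2\pi}|F\{\Phi_t\}(w)|\,|1 - jw/\lambda_Y|\,dw$, and using $|1-jw/\lambda_Y| = \sqrt{1+w^2/\lambda_Y^2}$ with the evenness of $|F\{\Phi_t\}|$ reduces the whole claim to showing $\tfrac1\pi\int_0^{K_2\pi}|F\{\Phi_t\}(w)|\sqrt{1+w^2/\lambda_Y^2}\,dw \le \tfrac{u-1}{\pi}$, i.e. that the integral is bounded by $u(p^{\textrm{I}}_{n,t},p^{\textrm{V}}_{m,t})-1$ from \eqref{u}.

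The main obstacle is this last integral. Here I would insert the closed form of $F\{\Phi_t\}$ from \eqref{F}; in the regime where $\sigma^2$ is negligible and the support parameter $K_1$ is large, the boundary terms simplify to $F\{\Phi_t\}(w)\approx e^{jw\ell(c_t)}\,\tfrac{c_t}{jw(c_t+jw)}$, so $|F\{\Phi_t\}(w)|\approx \tfrac{c_t}{|w|\sqrt{c_t^2+w^2}}$. Carrying the weight $\sqrt{1+w^2/\lambda_Y^2}$ through and integrating termwise with the elementary antiderivatives $\int dw/\sqrt{c_t^2+w^2}=\operatorname{arcsinh}(w/c_t)$, $\int dw/(w\sqrt{c_t^2+w^2})$, and $\int \sqrt{1+w^2/\lambda_Y^2}\,dw/w$, the upper-limit evaluations at $w=K_2\pi$ should assemble into the four summands of $u$ in \eqref{u}, with the constant $-1$ absorbing the lower-limit contributions. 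The delicate points I would watch most carefully are (i) justifying the large-$K_1$ and negligible-$\sigma^2$ truncations so the final bound is free of $K_1$, since the oscillatory factor $e^{-jwK_1}$ is only bounded, not small; (ii) the behavior near $w=0$, where the individual antiderivatives are logarithmically singular but combine to stay finite, which is precisely what produces the finite offset; and (iii) verifying the algebraic identities (e.g. $(\sqrt{1+s^2}-1)/s = s/(\sqrt{1+s^2}+1)$) needed to recast the logarithms into the exact form displayed in \eqref{u}.
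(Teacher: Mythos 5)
Your overall architecture coincides with the paper's own proof: the bias--variance split $\mathbb{E}\bigl[(\hat{P}_{m,t}^{(nm)}-P_{m,t}^{(nm)})^2\bigr]=\bigl(\theta_t-\mathbb{E}[\hat{\theta}_t]\bigr)^2+\mathrm{Var}[\hat{\theta}_t]$, the bias computed from the unbiasedness of the empirical characteristic function together with $F^{*}\{f_Z\}\,(1-jw/\lambda_Y)=F^{*}\{f_{E,m}\}$, which collapses the bias to the tail integral over $|w|\geq K_2\pi$ (the second term of \eqref{qos upper}), and the variance bounded by $\tfrac{1}{T}\mathbb{E}[|\xi_1|^2]$ followed by the triangle inequality, reducing the theorem to showing $\int_0^{K_2\pi}\|F\{\Phi_t\}\|\sqrt{1+\lambda_Y^{-2}w^2}\,dw\leq u-1$. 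Up to that reduction your argument is sound, and your variance chain ($\mathrm{Var}=\tfrac1T\mathrm{Var}(\xi_1)\leq\tfrac1T\mathbb{E}[|\xi_1|^2]$) is in fact written more carefully than the paper's own step \eqref{inequality_V_E}.

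The gap is in how you evaluate that last integral. You approximate $F\{\Phi_t\}\approx e^{jw\ell(c_t)}\,c_t/\bigl(jw(c_t+jw)\bigr)$, i.e., you discard the boundary term $e^{-jwK_1}/(-jw)$ of \eqref{F} on the grounds that $K_1$ is large. But that term has modulus exactly $1/|w|$ for \emph{every} $K_1$ --- oscillation does not make it small --- and dropping a summand inside a modulus is not an upper bound in any case. The paper keeps it and bounds $\|F\{\Phi_t\}\|\leq 1/|w|+c_t/\bigl(|w|\sqrt{c_t^2+w^2}\bigr)$ by the triangle inequality; it is precisely the $1/|w|$ piece, integrated against $\sqrt{1+\lambda_Y^{-2}w^2}$, that produces the first two summands $\sqrt{1+\lambda_Y^{-2}K_2^2\pi^2}+\ln\bigl((\sqrt{1+\lambda_Y^{-2}K_2^2\pi^2}-1)/(\lambda_Y^{-1}K_2\pi)\bigr)$ of \eqref{u}. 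Your reduced integrand can only generate the two $c_t$-dependent summands, so the four terms of $u$ cannot assemble as you predict; the proposal fails at exactly the point you flagged as delicate point (i). Concerning your point (ii): after the triangle inequality every integrand is nonnegative, so the logarithmic singularities at $w=0$ cannot cancel --- $\int_0\sqrt{1+\lambda_Y^{-2}w^2}\,w^{-1}dw$ and $\int_0 c_t\bigl(w\sqrt{c_t^2+w^2}\bigr)^{-1}dw$ each diverge, hence so does their sum. (This is also a soft spot of the paper's own step \eqref{MSE integral}, which evaluates the antiderivatives only at the upper limit; a rigorous repair would invoke the bound $\|F\{\Phi_t\}\|\leq K_1$ near the origin, valid since $\Phi_t$ is bounded by one on a support of length $K_1$.) So your instinct to scrutinize $w\to0$ was correct, but hoping for cancellation of the singular parts is not the fix.
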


Theorem 2 shows that the transmit powers $p^{\textrm{V}}_{m,t}$ and $p^{\textrm{I}}_{n,t}$ jointly determine the accuracy of the estimated probability $\hat{P}_{m,t}^{(nm)}$. Specifically, the second term on the right side of \eqref{qos upper} is related to the unknown distribution $\mathcal{E}_{nm}$ while the first term is determined by $p^{\textrm{V}}_{m,t}$ and $p^{\textrm{I}}_{n,t}$. At each time slot $t$ during adaptation, the optimal $p^{\textrm{V}}_{m,t}$ and $p^{\textrm{I}}_{n,t}$ should be optimized to minimize \eqref{qos upper}, which exactly corresponds to the lower-level problem in \eqref{c1-2}. We also observe that $u(p^{\textrm{I}}_{n,t}, p^{\textrm{V}}_{m,t})$ is not only determined by the current transmit powers $p^{\textrm{I}}_{n,t}$ and $ p^{\textrm{V}}_{m,t}$, but it is also affected by the absorption power scheme $p^{\textrm{V}}_{m,\text{a}}$ and $p^{\textrm{I}}_{n,\text{a}}$ according to $\lambda_Y = \frac{p^{\textrm{I}}_{n,\textrm{a}} L^{\textrm{I}}_{nm,\textrm{a}} }{p^{\textrm{V}}_{m,\textrm{a}} L^{\textrm{V}}_{m,\textrm{a}} (1 - \delta_{m}^2)}$. In other words, to effectively utilize \eqref{prob estimated} during adaptation, the \ac{C-V2X} network must account for the accuracy of the estimated error distribution during absorption. 
Thus, Theorem 2 provides critical guidance on the real-time transmit power design for recovering the \ac{C-V2X}'s \ac{QoS}.

According to the upper bound in Theorem 2, we can now show that the first integral term introduced by the truncation converges to $0$ asymptomatically:
\begin{corollary}
Given \eqref{F}, the following limit holds
\begin{equation}
    \lim_{K_2 \rightarrow \infty} \left( \frac{1}{2\pi} \int_{w\geq |K_2\pi|} F\left\{ \Phi_t \right\} F^{*}\left\{ f_{E,m} \right\} dw \right)^2 = 0.
\end{equation}
\end{corollary}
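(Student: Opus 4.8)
The plan is to show that the full-line integrand $F\{\Phi_t\}F^*\{f_{E,m}\}$ is absolutely integrable, so that its tail over $\{|w|\geq K_2\pi\}$ necessarily vanishes as $K_2\to\infty$. The natural route is through the $L^2$ theory of the Fourier transform (Plancherel) combined with the Cauchy--Schwarz inequality, rather than through a pointwise decay estimate on the integrand.

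First I would establish that $F\{\Phi_t\}\in L^2(\mathbb{R})$. By the definition of $\Phi_t$ in step (c) of \eqref{prob analytic}, $\Phi_t(x)$ equals a factor bounded in $[0,1]$ multiplied by the indicator $I(x)$ supported on an interval of length $K_1$; hence $\int_{\mathbb{R}}|\Phi_t(x)|^2\,dx\leq K_1<\infty$, so $\Phi_t\in L^2(\mathbb{R})$ and Plancherel's theorem yields $F\{\Phi_t\}\in L^2(\mathbb{R})$. As a consistency check, the closed form \eqref{F} is bounded near $w=0$ (each quotient has a removable singularity) and decays like $O(1/|w|)$ as $|w|\to\infty$, which is indeed square-integrable. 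Next I would argue $f_{E,m}\in L^2(\mathbb{R})$: as a probability density it is automatically in $L^1$, and assuming it is additionally bounded --- as holds for the smooth error densities of interest here --- boundedness together with $L^1$ membership gives $f_{E,m}\in L^2(\mathbb{R})$, whence $F\{f_{E,m}\}\in L^2(\mathbb{R})$ and therefore $F^*\{f_{E,m}\}\in L^2(\mathbb{R})$.

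With both factors in $L^2$, Cauchy--Schwarz applied on the tail gives
\begin{equation}
\left|\frac{1}{2\pi}\int_{w\geq|K_2\pi|}F\{\Phi_t\}F^*\{f_{E,m}\}\,dw\right|
\leq \frac{1}{2\pi}\,\|F\{\Phi_t\}\|_2
\left(\int_{w\geq|K_2\pi|}\left|F\{f_{E,m}\}\right|^2dw\right)^{1/2}.
\end{equation}
Since $F\{f_{E,m}\}\in L^2$, its tail mass $\int_{w\geq|K_2\pi|}|F\{f_{E,m}\}|^2\,dw\to0$ as $K_2\to\infty$, while $\|F\{\Phi_t\}\|_2$ is a finite constant independent of $K_2$; the right-hand side therefore tends to $0$. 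Squaring the vanishing left-hand side preserves the limit, which is exactly the claim.

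The main obstacle is that a naive pointwise bound does \emph{not} suffice: the characteristic function obeys only $|F^*\{f_{E,m}\}|\leq1$, and from \eqref{F} one finds that the leading non-oscillatory parts of $F\{\Phi_t\}$ cancel but still leave a term of magnitude $O(1/|w|)$, so the product is only $O(1/|w|)$, which is not integrable. Absolute integrability is recovered solely by exploiting the \emph{joint} square-integrability of the two factors; hence the crux of the argument is the pair of $L^2$ memberships established above, and in particular the mild boundedness assumption placed on the unknown density $f_{E,m}$.
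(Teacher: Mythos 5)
The paper states this corollary without any proof at all --- there is no argument in the text and no corresponding appendix --- so there is no authorial proof to compare against, and your proposal has to stand on its own. As a self-contained argument it is sound: $\Phi_t$ is bounded by $1$ and supported on an interval of length $K_1$, so $\Phi_t\in L^2(\mathbb{R})$ and Plancherel gives $F\{\Phi_t\}\in L^2(\mathbb{R})$; Cauchy--Schwarz with the tail placed on the second factor, together with the vanishing tail mass of an $L^2$ function, then forces the integral to zero, and squaring is harmless. Your diagnosis of why the cheap route fails is also accurate: in \eqref{F} the non-oscillatory parts cancel to $O(1/w^2)$, but the oscillatory term carrying $e^{-jwK_1}$ decays only like $1/|w|$, so the pointwise bound $|F^{*}\{f_{E,m}\}|\le 1$ leaves a non-integrable majorant. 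A further merit of your hypotheses is that they make the integrand absolutely integrable, so the tail integral in the statement is well defined in the first place --- something the paper silently glosses over.

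The weak point is the added hypothesis that $f_{E,m}$ is bounded (hence in $L^2$). The paper's central premise is that $\mathcal{E}_{nm}$ is arbitrary and unknown, with no prior assumptions, so this is a genuine extra condition, and your proof really needs it: for a density in $L^1\setminus L^2$ (e.g. $f(x)=\tfrac{1}{2\sqrt{x}}$ on $(0,1]$) the Plancherel--Cauchy--Schwarz route collapses. The assumption can be removed by exploiting the oscillation you already identified: split \eqref{F} into the non-oscillatory part, which is $O(1/w^2)$ and whose tail contribution vanishes using only $|F^{*}\{f_{E,m}\}|\le 1$, plus a term of the form $C\,e^{jw(\ell(c_t)-K_1)}/(c_t+jw)$; for the latter, substitute $F^{*}\{f_{E,m}\}(w)=\int f_{E,m}(x)e^{jwx}\,dx$, interchange the integrals, and note that $\int_{|w|\ge K_2\pi} e^{jwb}/(c_t+jw)\,dw$ is uniformly bounded and tends to $0$ as $K_2\to\infty$ for every $b\neq 0$ (sine-integral asymptotics), so dominated convergence in $x$ finishes the proof for an arbitrary density, at the price of handling conditionally convergent integrals. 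If you keep your cleaner $L^2$ route, promote the boundedness of $f_{E,m}$ from an aside to an explicit standing assumption of the corollary.
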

\noindent
Given Corollary 1, we can rewrite problem \eqref{opt4}:
\begin{subequations}
\label{opt5}
\begin{IEEEeqnarray}{s,rCl'rCl'rCl}
& \underset{p^{\textrm{V}}_{m,t}, p^{\textrm{I}}_{n,t}}{\text{min}} &\quad& \left[u(p^{\textrm{I}}_{n,t}, p^{\textrm{V}}_{m,t}) - 1\right]^2\label{obj5}\\
&\text{s.t.} && c_t \frac{(1-\delta_{m,t}^2) L^{\textrm{V}}_{m,t} L^{\textrm{I}}_{n,t} |g^{\textrm{I}}_{n,t}|^2}{\gamma_{\textrm{V}} L^{\textrm{I}}_{nm,t} L^{\textrm{V}}_{mn,t} |g^{\textrm{V}}_{mn,t}|^2} \geq 2^{\frac{R_0}{B}} - 1 \label{c5-1},  \\
&&& \eqref{c4-2} \text{ and } \eqref{c4-3} \nonumber, 
\end{IEEEeqnarray}
\end{subequations}
where \eqref{c5-1} is transformed from \eqref{c4-1} by ignoring the additive noise $\sigma^2$. To solve \eqref{opt5}, the monotonicity of $u(p^{\textrm{I}}_n, p^{\textrm{V}}_m)$ is shown next.
\begin{proposition}
Recall the definition $c_t =  \frac{ \gamma_{\textrm{V}} p^{\textrm{I}}_{n,t} L^{\textrm{I}}_{nm,t}}{p^{\textrm{V}}_{m,t} L^{\textrm{V}}_{m,t} (1-\delta_m^2)}$. Consequently, $u(p^{\textrm{I}}_{n,t}, p^{\textrm{V}}_{m,t})$ can be expressed as $u(c_t)$, where $u(c_t)$ is monotonically increasing with respect to $c_t$, if the truncation factor $K_2$ in \eqref{prob estimated} satisfies the following conditions:
\begin{equation}
\label{condition}
\begin{aligned}
    & \frac{x}{\sqrt{x^2 + D^{2}}} - \ln{\left(\frac{x + \sqrt{x^2 + D^{2}}}{D}\right)} \\
    & - \lambda_Yx^2 \ln{\left( x + \sqrt{x^2 + D^{2}} \right)} \leq 0, \forall x > 0, 
\end{aligned}
\end{equation}
where $D \triangleq (K_2\pi)^{-1}$ and $x \triangleq (c_t)^{-1}$. Moreover, the condition \eqref{condition} can be always satisfied in our settings since we can pick $K_2$ large enough, i.e., $D$ close to $0$.
\end{proposition}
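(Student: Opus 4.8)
The plan is to first reduce $u$ to a genuine function of the single scalar $c_t$, then prove the monotonicity by a direct differentiation, and finally treat the asymptotic ``moreover'' claim separately. The reduction is almost immediate: during the adaptation phase the absorption power scheme and the truncation factor are already fixed, so $\lambda_Y = \frac{p^{\textrm{I}}_{n,\textrm{a}} L^{\textrm{I}}_{nm,\textrm{a}}}{p^{\textrm{V}}_{m,\textrm{a}} L^{\textrm{V}}_{m,\textrm{a}}(1-\delta_m^2)}$ and $K_2$ are constants. Inspecting \eqref{u}, the first line depends only on $\lambda_Y$ and $K_2$, whereas the last two summands depend on the adaptation powers exclusively through $c_t = \frac{\gamma_{\textrm{V}} p^{\textrm{I}}_{n,t} L^{\textrm{I}}_{nm,t}}{p^{\textrm{V}}_{m,t} L^{\textrm{V}}_{m,t}(1-\delta_m^2)}$. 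Hence $u$ is a true function $u(c_t)$, which is what legitimizes the one-dimensional search; this establishes the first assertion with essentially no work.

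For the monotonicity I would adopt the change of variables flagged in the statement: set $x = c_t^{-1}$ and $D = (K_2\pi)^{-1}$, and write $r = \sqrt{x^2+D^2}$, $P = K_2\pi$, $s = \sqrt{c_t^2+P^2}$. Using the identities $\frac{P+s}{c_t} = \frac{x+r}{D}$ and $\frac{c_t P}{s+P} = \frac{1}{x+r}$, the $c_t$-dependent part of $u$ collapses to $\frac{\lambda_Y^{-1}}{x}\ln\frac{x+r}{D} - x\ln(x+r)$. Differentiating in $x$ and repeatedly invoking the simplification $\frac{d}{dx}\ln(x+r) = \frac{1}{r}$ (which follows from $r' = x/r$) lets the bulky terms cancel, leaving $\frac{du}{dx} = -\frac{\lambda_Y^{-1}}{x^2}\ln\frac{x+r}{D} + \frac{\lambda_Y^{-1}}{xr} - \ln(x+r) - \frac{x}{r}$. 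Multiplying through by the positive factor $\lambda_Y x^2$ then produces exactly the left-hand side of \eqref{condition} minus the nonnegative quantity $\frac{\lambda_Y x^3}{r}$.

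Since $x = c_t^{-1}$ reverses order, monotone increase of $u$ in $c_t$ is equivalent to $\frac{du}{dx}\le 0$; because $\lambda_Y x^2 > 0$, the identity above shows that condition \eqref{condition} is a sufficient criterion, as it forces $\lambda_Y x^2\frac{du}{dx}$ to equal the left-hand side of \eqref{condition} minus $\frac{\lambda_Y x^3}{r}$, which is then $\le 0$. This is precisely the stated implication. For the final claim I would analyze the left-hand side of \eqref{condition} as $D\to 0$: the piece $-\ln\frac{x+r}{D} = -\ln(x+r) + \ln D$ carries a $\ln D\to-\infty$ contribution that is uniform in $x$, while $\frac{x}{r}\in[0,1]$ and $\lambda_Y x^2\ln(x+r)$ stay bounded as long as $x$ remains in a compact set. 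The box constraints \eqref{c4-3} keep $c_t$, and hence $x$, in a compact interval bounded away from $0$, so the $\ln D$ term dominates uniformly and the left-hand side becomes negative for all relevant $x$ once $K_2$ is large enough.

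The main obstacle is twofold. Algebraically, the differentiation must be carried out so that the logarithmic and square-root terms cancel into the clean form above; the whole match with \eqref{condition} hinges on the identity $\frac{d}{dx}\ln(x+r) = \frac{1}{r}$, and a single sign slip there destroys it. Conceptually, one must recognize that \eqref{condition} is only sufficient (the sharp requirement being the weaker $\frac{x}{r} - \ln\frac{x+r}{D} - \lambda_Y x^2\ln(x+r) \le \frac{\lambda_Y x^3}{r}$), and that its validity cannot hold literally for \emph{all} $x>0$ when $D<1$: a small-$x$ expansion shows the term $-\lambda_Y x^2\ln(x+r)$ turns positive there. Consequently the $D\to 0$ argument must be applied on the compact feasible range of $c_t$ dictated by \eqref{c4-3} rather than on $(0,\infty)$, and I would make this boundedness explicit to justify the ``always satisfied in our settings'' statement.
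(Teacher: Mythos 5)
Your proof is correct and, for the core monotonicity argument, follows essentially the same route as the paper: the paper likewise isolates the $c_t$-dependent part of \eqref{u} (its $\hat{u}(c_t)$), substitutes $x = c_t^{-1}$ and $D = (K_2\pi)^{-1}$, differentiates using $\frac{\mathrm{d}}{\mathrm{d}x}\ln\bigl(x+\sqrt{x^2+D^2}\bigr) = \bigl(\sqrt{x^2+D^2}\bigr)^{-1}$, and obtains \eqref{condition} as a sufficient condition by discarding exactly the same nonnegative term $\lambda_Y x^3/\sqrt{x^2+D^2}$. Where you genuinely depart from the paper is the ``moreover'' claim: the paper simply asserts that for $D$ close to $0$ the term $-\ln\bigl((x+\sqrt{x^2+D^2})/D\bigr)<0$ is dominant, whereas you correctly observe that this dominance is not uniform over $x>0$ --- near $x=0$ the first two terms of \eqref{condition} cancel to $O(x^3)$ and the left-hand side behaves like $-\lambda_Y x^2\ln D>0$, so the condition as literally stated for all $x>0$ fails whenever $D<1$. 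Your fix --- invoking the power box constraints \eqref{c4-3} (equivalently \eqref{c6-3}) to confine $c_t$, hence $x$, to a compact interval bounded away from zero, on which the $\ln D\to-\infty$ contribution does dominate uniformly --- is exactly what is needed to make the asymptotic claim rigorous, and it is the sense in which the paper's statement must be read. This added care does not change the algorithmic conclusion (monotonicity of $u(c_t)$ over the feasible range searched in Algorithm \ref{algo2}), but it repairs a real gap in the paper's own justification.
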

\begin{proof}
See Appendix \ref{Proof of Theorem 2}.
\end{proof}
Given the result of Proposition 1, $u(c_t)$ can be viewed as an increasing function of $c_t$. However, problem \eqref{opt5} is still challenging since \eqref{c4-2} is complex due to \eqref{prob estimated}. Moreover, real-time solutions for \eqref{opt5} are needed due to the fast-changing nature of small-scale fading, where computational complexity should be reduced as much as possible. To address this issue, we approximate the monotonicity of \eqref{c4-2} in Corollary 2. Given Corollary 2, we reformulate problem \eqref{opt5}:
\begin{subequations}
\label{opt6}
\begin{IEEEeqnarray}{s,rCl'rCl'rCl}
& \underset{c_{t}}{\text{min}} &\quad& |u(c_t) - 1| \label{obj6}\\
&\text{s.t.} && c_t \frac{(1-\delta_m^2) L^{\textrm{V}}_{m,t} L^{\textrm{I}}_{n,t} |g^{\textrm{I}}_{n,t}|^2}{\gamma_{\textrm{V}} L^{\textrm{I}}_{nm,t} L^{\textrm{V}}_{mn,t} |g^{\textrm{V}}_{mn,t}|^2} \geq 2^{\frac{R_0}{B}} - 1 \label{c6-1},  \\
&&& \beta(c_t) \geq P_0 \label{c6-2}, \\
&&&  \frac{ \gamma_{\textrm{V}} p^{\textrm{I}}_{\textrm{min}} L^{\textrm{I}}_{nm,t}}{p^{\textrm{V}}_{\textrm{max}} L^{\textrm{V}}_{m,t} (1-\delta_{m,t}^2)}  \leq c_t \leq  \frac{ \gamma_{\textrm{V}} p^{\textrm{I}}_{\textrm{max}} L^{\textrm{I}}_{nm,t}}{p^{\textrm{V}}_{\textrm{min}} L^{\textrm{V}}_{m,t} (1-\delta_{m,t}^2)}  \label{c6-3},
\end{IEEEeqnarray}
\end{subequations}
where \eqref{c6-3} is derived from \eqref{c4-3} and the analytic expression of $\beta(c_t)$ is given in \eqref{beta_analytic} based on \eqref{F} and \eqref{prob estimated}.
\begin{figure*}[t]
\vspace{-10pt}
    \begin{equation}
\label{beta_analytic}
    \beta(c_t) = 1 - \frac{1}{2\pi T}\sum_{k=1}^T \int_{-K \pi}^{K \pi} \left[ \frac{e^{-jwK_1} - 1}{-jw} e^{jw\ell(c_t)} + \frac{e^{-(c_t+jw)K_1 }- 1}{c_t+jw} e^{ jw\ell(c_t)} \right] e^{jwz_k}(1-\frac{jw}{\lambda_Y}) dw.
\end{equation}
\vspace{-20pt}
\end{figure*}
\begin{corollary}
\label{corollary 2}
Provided the additive noise $\sigma^2$ is negligible, the probability $\hat{P}_{m,t}^{(nm)}$ in \eqref{prob estimated} can be approximated as follows:
\begin{equation}
\label{beta_c_t}
\begin{aligned}
\hat{P}_{m,t}^{(nm)} = & \mathbb{P} \Biggl\{\frac{p^{\textrm{V}}_{m,t} L^{\textrm{V}}_{m,t}}{ \gamma_{\textrm{V}} } (1 - \delta_{m,t}^2) | {e}_{m,t} |^2  \\ 
& \quad -  p^{\textrm{I}}_{n,t} L^{\textrm{I}}_{nm,t} {e}_{nm,t} \geq b_{m,t} \Bigm| {e}_{nm,t} \sim \hat{\mathcal{E}}_{nm}\Biggr\}, \\
\approx & \mathbb{P} \Biggl\{ c_t \left( | \hat{g}^{\textrm{I}}_{nm,t} |^2 + {e}_{nm,t}\right) \\
& \quad \leq \frac{\delta_{m}^2}{1 - \delta_{m}^2} | \hat{g}^{\textrm{V}}_{m,t} |^2  + | {e}_{m,t} |^2 \Bigm| {e}_{nm,t} \sim \hat{\mathcal{E}}_{nm} \Biggr\}.
\end{aligned}
\end{equation}
Moreover, the approximated probability in \eqref{beta_c_t}, defined as $\beta(c_t)$, is monotonically decreasing with respect to $c_t$.
\end{corollary}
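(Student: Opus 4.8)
The plan is to prove Corollary~\ref{corollary 2} in two stages that mirror its two assertions: the algebraic reduction to the simple event in \eqref{beta_c_t}, and the monotonicity of the resulting $\beta(c_t)$.

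\emph{Stage 1 (the approximation).} I would start from the event defining the estimated probability in \eqref{prob estimated}, namely $\frac{p^{\textrm{V}}_{m,t} L^{\textrm{V}}_{m,t}}{\gamma_{\textrm{V}}}(1-\delta_{m,t}^2)|e_{m,t}|^2 - p^{\textrm{I}}_{n,t} L^{\textrm{I}}_{nm,t} e_{nm,t} \geq b_{m,t}$ with $e_{nm,t}\sim\hat{\mathcal{E}}_{nm}$, where in the single-pair case $b_{m,t} = \sigma^2 + p^{\textrm{I}}_{n,t} L^{\textrm{I}}_{nm,t}|\hat{g}^{\textrm{I}}_{nm,t}|^2 - \frac{p^{\textrm{V}}_{m,t} L^{\textrm{V}}_{m,t}\delta_{m,t}^2|\hat{g}^{\textrm{V}}_{m,t}|^2}{\gamma_{\textrm{V}}}$ after substituting \eqref{i v2v} and \eqref{i v2i}. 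Dropping the noise term $\sigma^2$ (the standing hypothesis of the corollary), I would rearrange so that $|\hat{g}^{\textrm{I}}_{nm,t}|^2 + e_{nm,t}$ is grouped on one side and the $|e_{m,t}|^2$ and $\delta_{m,t}^2|\hat{g}^{\textrm{V}}_{m,t}|^2$ terms on the other, then divide through by the strictly positive coefficient $\frac{p^{\textrm{V}}_{m,t} L^{\textrm{V}}_{m,t}(1-\delta_{m,t}^2)}{\gamma_{\textrm{V}}}$. Recognizing the emerging ratio as $c_t$ and simplifying $\frac{p^{\textrm{V}}_{m,t} L^{\textrm{V}}_{m,t}\delta_{m,t}^2}{\gamma_{\textrm{V}}}$ divided by the same coefficient to $\frac{\delta_m^2}{1-\delta_m^2}$, the event collapses to $c_t(|\hat{g}^{\textrm{I}}_{nm,t}|^2 + e_{nm,t}) \leq \frac{\delta_m^2}{1-\delta_m^2}|\hat{g}^{\textrm{V}}_{m,t}|^2 + |e_{m,t}|^2$, exactly \eqref{beta_c_t}. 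This stage is routine sign-preserving algebra once $\sigma^2$ is neglected.

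\emph{Stage 2 (monotonicity).} I would recast $\beta(c_t)$ as $\mathbb{E}[\mathbf{1}\{c_t W \leq V\}]$, where $W \triangleq |\hat{g}^{\textrm{I}}_{nm,t}|^2 + e_{nm,t}$ and $V \triangleq \frac{\delta_m^2}{1-\delta_m^2}|\hat{g}^{\textrm{V}}_{m,t}|^2 + |e_{m,t}|^2$, the expectation being over $e_{nm,t}\sim\hat{\mathcal{E}}_{nm}$ and $|e_{m,t}|^2\sim\mathrm{Exp}(1)$, a law that is itself independent of $c_t$. The key observation is that $V\geq 0$ always, being a sum of nonnegative terms (using $0\le\delta_m^2<1$, so $\frac{\delta_m^2}{1-\delta_m^2}\ge 0$). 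Hence, for each fixed realization, the map $c_t\mapsto\mathbf{1}\{c_t W\leq V\}$ is non-increasing on $c_t>0$: if $W\le 0$ then $c_t W\le 0\le V$ and the indicator is identically $1$, whereas if $W>0$ the indicator equals $\mathbf{1}\{c_t\le V/W\}$, which decreases from $1$ to $0$ as $c_t$ grows. Since the governing measure does not depend on $c_t$, taking expectations preserves this pointwise monotonicity, yielding $\beta(c_t')\le\beta(c_t)$ whenever $c_t'>c_t$.

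\emph{Main obstacle.} The genuine difficulty is choosing the right representation rather than the algebra. As delivered in \eqref{beta_analytic}, $\beta(c_t)$ is a truncated, oscillatory Fourier integral whose direct differentiation in $c_t$ (acting through $\ell(c_t)$ and the $c_t+jw$ denominators) is unwieldy and yields no obvious sign. The payoff of Stage~1 is precisely that it re-expresses $\beta(c_t)$ as a bona fide probability of the simple threshold event $\{c_t W\le V\}$, so that a distribution-free coupling argument replaces any differentiation. The one point requiring care is the sign of $W$: because $e_{nm,t}$ under $\hat{\mathcal{E}}_{nm}$ is not guaranteed nonnegative, one cannot naively claim that scaling up $c_t$ scales up the left side; the case split on the sign of $W$ above resolves this, since $V\ge 0$ forces the indicator to be non-increasing regardless. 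I would also note explicitly that the monotonicity established is that of the $\sigma^2$-neglected surrogate, consistent with the corollary's hypothesis.
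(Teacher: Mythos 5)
Your proof is correct, but there is no paper proof to compare it with: Corollary~2 is asserted in the paper without any proof (unlike Lemma~1, Theorem~2, and Proposition~1, it has no appendix entry) and is then invoked directly to justify the bisection search in Algorithm~\ref{algo2}; your argument fills that gap. Stage~1 is the intended routine algebra: substituting \eqref{i v2v} and \eqref{i v2i} into the delay event, discarding $\sigma^2$, and dividing by the positive factor $p^{\textrm{V}}_{m,t}L^{\textrm{V}}_{m,t}(1-\delta_{m,t}^2)/\gamma_{\textrm{V}}$ yields exactly \eqref{beta_c_t}; you also, correctly, read the paper's $b_{m,t}$ with the superscripts un-swapped, i.e., $b_{m,t}=\sigma^2+p^{\textrm{I}}_{n,t}L^{\textrm{I}}_{nm,t}|\hat{g}^{\textrm{I}}_{nm,t}|^2-p^{\textrm{V}}_{m,t}L^{\textrm{V}}_{m,t}\delta_{m,t}^2|\hat{g}^{\textrm{V}}_{m,t}|^2/\gamma_{\textrm{V}}$ in the single-pair case. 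Stage~2 is where your proposal adds real value: differentiating the paper's analytic expression \eqref{beta_analytic} in $c_t$ (which enters through $\ell(c_t)$ and the $c_t+jw$ denominator) gives no usable sign, whereas your representation $\beta(c_t)=\mathbb{E}\left[\mathbf{1}\{c_t W\le V\}\right]$, with the law of $(W,V)$ independent of $c_t$ and $V\ge 0$, together with the case split on the sign of $W$ (needed precisely because $\hat{\mathcal{E}}_{nm}$ may put mass on $W<0$), gives pointwise monotonicity of the indicator and hence of its expectation. Two caveats are worth recording explicitly. First, the argument establishes that $\beta$ is non-increasing rather than strictly decreasing; this is all the feasibility interval and bisection in Algorithm~\ref{algo2} require, but strict decrease would need an extra hypothesis such as $V/W$ admitting a density on $\{W>0\}$. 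Second, the step ``taking expectations preserves pointwise monotonicity'' implicitly requires $\hat{\mathcal{E}}_{nm}$ to be a genuine (nonnegative, unit-mass) probability law; the corollary's own notation $e_{nm,t}\sim\hat{\mathcal{E}}_{nm}$ grants this, but the deconvolution estimator \eqref{hat pdf} does not automatically produce a nonnegative density, and against a signed ``density'' the argument would not transfer. Neither caveat is a flaw in your proof of the statement as written; the second is a looseness inherited from the paper itself.
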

Given the monotonicity shown in Proposition 1 and Corollary 2, we can efficiently derive the feasible region of $c_t$ in \eqref{opt6} by bisection search. Then, a one-dimensional search can be used to find the optimal $c_t$. The complete process for solving \eqref{opt6} of all matchings in the adaptation phase is given in Algorithm \ref{algo2}. Building on the \ac{PDF} estimation acquired in absorption, the \ac{RSU} can finally optimize the real-time transmit power schemes in adaptation to fulfill the \ac{QoS} requirements of the \ac{C-V2X} network. Together, these phases form a comprehensive framework for instilling resilience in C-V2X networks.
\begin{algorithm}[t]
\footnotesize
\caption{Transmit power scheme optimization in adaptation phase}
\label{algo2}
\begin{algorithmic}[1] 
\Require matching $\boldsymbol{A} $ and real-time imperfect \ac{CSI} $ \hat{\mathcal{H}}_t$
\Ensure Transmit power scheme $\boldsymbol{p}_{\mathcal{M},t}$ and $\boldsymbol{p}_{\mathcal{N},t}$
\For{\ac{V2V} link $m$ and \ac{V2I} link $n$ with $\alpha_{mn}=1$}
    \State Derive the feasible region of \eqref{c6-1}, \eqref{c6-2} and \eqref{c6-3} as $c_{l,t} \leq c_t \leq c_{u,t}$ through bisection search \;
    \State Find $c^{*}_t \in \left( c_{l,t}, c_{u,t}\right)$ that minimizes $| u(c^{*}_t) - 1 |$ through one dimensional search \;
    \If{$c^{*}_t \leq \frac{ \gamma_{\textrm{V}} p^{\textrm{I}}_{\text{min}} L^{\textrm{I}}_{nm,t}}{p^{\textrm{V}}_{\text{max}} L^{\textrm{V}}_{m,t} (1-\delta_m^2)}$}
        \State $p_{m,t}^{*} = p^{\textrm{V}}_{\text{max}}$ and $p_{n,t}^{*} =  p^{\textrm{I}}_{\text{min}}$ \;
    \ElsIf{$ \frac{ \gamma_{\textrm{V}} p^{\textrm{I}}_{\text{min}} L^{\textrm{I}}_{nm,t}}{p^{\textrm{V}}_{\text{max}} L^{\textrm{V}}_{m,t} (1-\delta_m^2)} <  c^{*}_t \leq \frac{ \gamma_{\textrm{V}} p^{\textrm{I}}_{\text{max}} L^{\textrm{I}}_{nm,t}}{p^{\textrm{V}}_{\text{max}} L^{\textrm{V}}_{m,t} (1-\delta_m^2)}$}
        \State $p_{m,t}^{*} = p^{\textrm{V}}_{\text{max}}$ and $p_{n,t}^{*} = \frac{c^{*}_t p^{\textrm{V}}_{\text{max}} L^{\textrm{V}}_{m,t} (1-\delta_m^2) }{\gamma_{\textrm{V}} L^{\textrm{I}}_{nm,t}} $ \;
    \Else
        \State $(p^{\textrm{V}}_{m,t})^{*} = \frac{ \gamma_{\textrm{V}} p^{\textrm{I}}_{\text{max}} L^{\textrm{I}}_{nm,t}}{c^{*}_t L^{\textrm{V}}_{m,t} (1-\delta_m^2)}$ and $(p^{\textrm{I}}_{n,t})^{*} = p^{\textrm{I}}_{\text{max}}$ \;
    \EndIf
\EndFor
\end{algorithmic}
\end{algorithm}

\vspace{-10pt}
\section{Simulation Results and Analysis}
For our simulations, we use a $400 \times 400 \ \text{m}^2$ area Manhattan mobility model with the \ac{RSU} deployed at the center of this area. The number of \ac{V2V} and \ac{V2I} links are $M = N = 10$. The distance between the transmitting vehicle and receiving vehicle of a \ac{V2V} link is chosen randomly between $60$~m and $80$~m according to a uniform distribution. The transmitting power requirements are $p_{\textrm{min}}^{\textrm{V}} = 10$~dBm, $p_{\textrm{max}}^{\textrm{V}} = 10$~dBm and $p_{\textrm{min}}^{\textrm{I}} = 10$~dBm, $p_{\textrm{max}}^{\textrm{I}} = 10$~dBm. The path loss exponents for all channel models are $\alpha = 3$ and the path loss models are detailed in Table \ref{channel simu}. The large-scale fading is assumed to vary every $\num{1000}$ time slots, i.e., $T = \num{1000}$. The noise spectrum density is $-174$ dBm/Hz. The other parameters are given by $B = 2$~MHz, $f_c = 5.9$~GHz, $\Delta_t = 1$~ms \cite{8993812}, $v = 10$~m/s, $D=\num{3200}$~bits \cite{9013252}, $\tau_0 = 15$~ms, $R_0 = 20$~Mbps, $P_0 = 95 \%$, and $K_1 = K_2 = 10$. All statistical results are averaged over \num{100000} channel realizations.
\begin{table}[t]
\centering
\caption{Path loss model}
\scriptsize
\vspace{-5pt}
\label{channel simu}
    \begin{tabular}{|c|c|c|}
    \hline
    \textbf{Channels}    & \textbf{Path loss}     & \textbf{\makecell[c]{Shadowing \\ standard deviation}}    \\ \hline
    $h^{\textrm{V}}_m$ & WINNER + B1 (LOS) \cite{kyosti2007winner} & $4$ dB \cite{9400748} \\ \hline
    $h^{\textrm{I}}_n$ & $128.1 + 37.6 \log_{10}d_n$ ($d_n$ in km)  & $8$ dB \cite{9400748} \\ \hline
    $h^{\textrm{I}}_{nm}$ & WINNER + B1 (NLOS)  \cite{kyosti2007winner}  & $8$ dB  \\ \hline
    $h^{\textrm{V}}_{mn}$ & WINNER + B1 (NLOS)           & $8$ dB  \\ \hline
    \end{tabular}
    \vspace{-0.5cm}
\end{table}

We first evaluate the effectiveness of the proposed absorption phase in estimating the \ac{PDF} of \ac{CSI} error distribution. We consider the case where parameter vector $\boldsymbol{\lambda}$ are equal for all $M$ \ac{V2V} links, i.e., $\lambda_1 = \ldots = \lambda_M = \lambda_\textrm{V} = 0.5$. Two different \ac{CSI} error distributions are assumed. Type I follows a \ac{GMM} with two components $x_1 \sim \mathcal{N}(0.2,0.04)$ and $x_2 \sim \mathcal{N}(0.8,0.02)$ of equal weight while Type II has different components $x_3 \sim \mathcal{N}(0.4,0.02)$ and $x_4 \sim \mathcal{N}(0.6,0.04)$ with different weight $0.4$ and $0.6$. Fig. \ref{pdf_two_type} shows the true \ac{PDF} of \ac{CSI} error distribution and the obtained \ac{PDF} on the worst match and best match obtained in sub-problem \eqref{opt2}. Specifically, the worst match and best match are the matching of the highest and lowest edge weight $\phi^{*}_{nm}$, indicating the worst and best adaptation capability respectively. From Fig. \ref{pdf_two_type}, we observe that the proposed absorption phase effectively enhances the adaptation capability of the \ac{C-V2X} network by accurately estimating the \ac{PDF} of the unknown error distribution, which is crucial for the subsequent adaptation. Furthermore, while the obtained \ac{PDF} for the worst match may not be deemed entirely precise, the resulting \ac{QoS} in the adaptation phase is still largely recovered, which will be later demonstrated in Fig. \ref{delay_two_phases}. For the subsequent simulations, we focus solely on Type I \ac{CSI} error distribution.
\begin{figure}[t]
\vspace{-6pt}
\captionsetup[subfigure]{font=small}  
    \centering
    \subfloat[]{
        \includegraphics[width=0.24\textwidth]{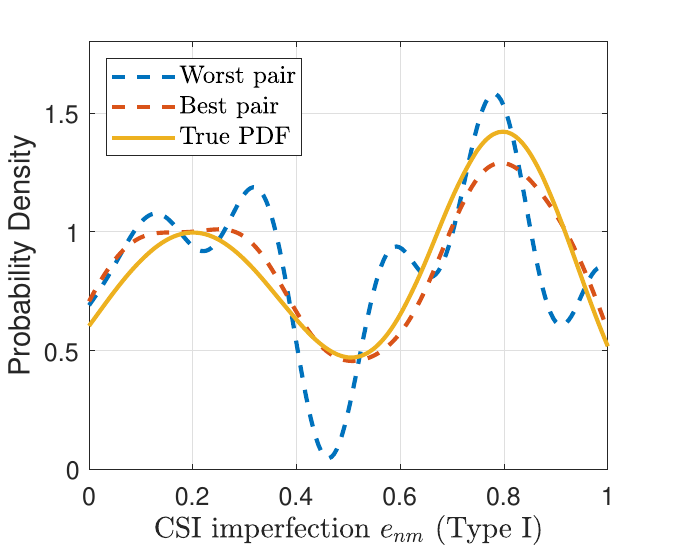}
        \label{pdf_1}
    }
    \subfloat[]{
        \includegraphics[width=0.24\textwidth]{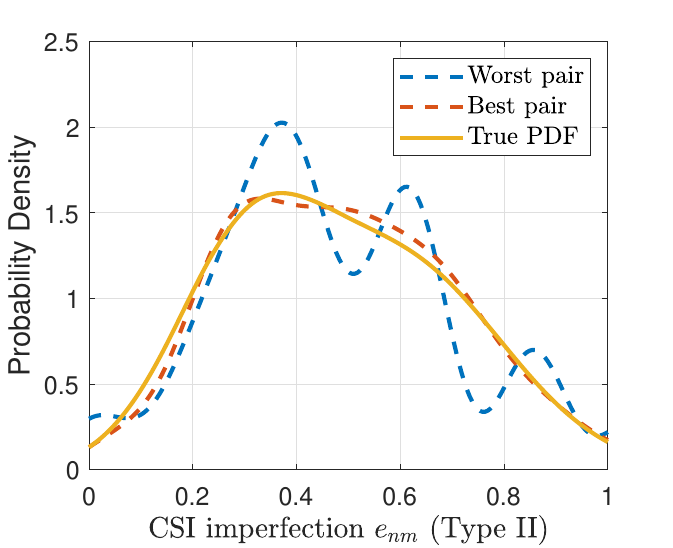}
        \label{pdf_2}
    }
    \vspace{-6pt}
    \caption{Deconvolution-based estimation obtained in the absorption phase: a) Type I error distribution, b) Type II error distribution. }
    \label{pdf_two_type}
        \vspace{-18pt}
\end{figure}

\begin{figure}[t]
\captionsetup[subfigure]{font=small}  
    \centering
    \subfloat[]{
        \includegraphics[width=0.24\textwidth]{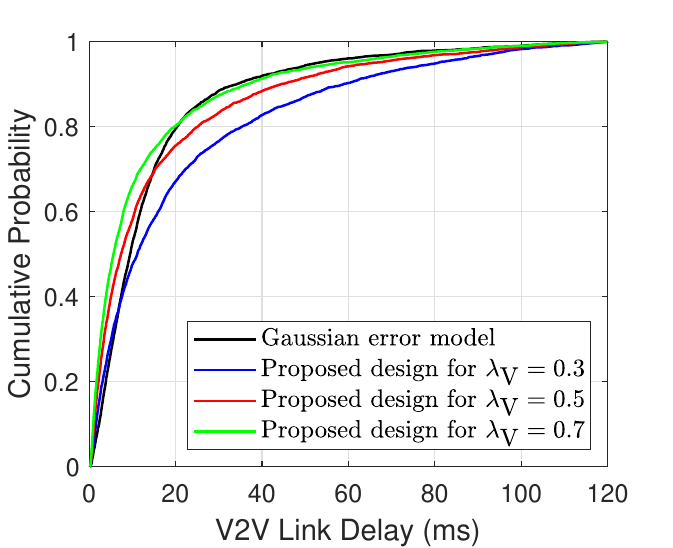}
        \label{plot_dalay_a}
    }
    \subfloat[]{
        \includegraphics[width=0.24\textwidth]{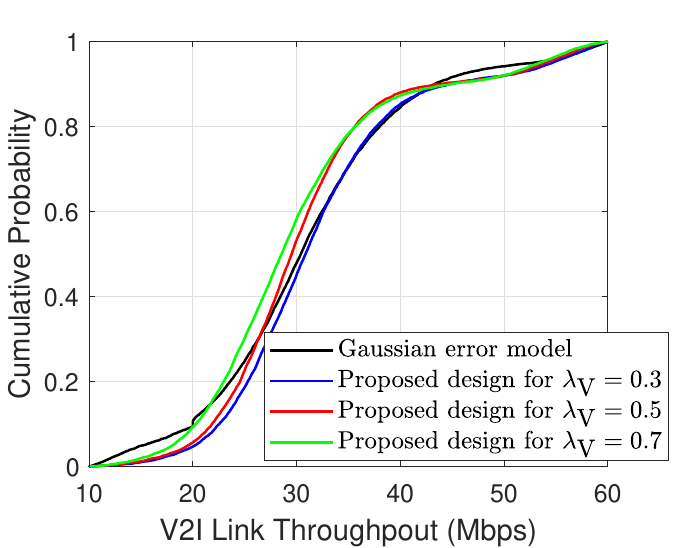}
        \label{fig:subfig2}
    }
        \vspace{-5pt}
    \caption{\ac{CDF} of \ac{QoS} on vehicular links in the absorption phase: a) Delay on \ac{V2V} links, b) Throughput on \ac{V2I} links.}
    \label{plot_absorption} 
            \vspace{-18pt}
\end{figure}
Fig. \ref{plot_absorption} shows the \ac{CDF} of \ac{QoS} on vehicular links during the absorption phase. The \ac{CDF} is obtained by considering all \ac{V2V} links and \ac{V2I} links. We compare with a Gaussian error model \cite{9400748} where the transmit power and matching are optimized based on the assumption that the \ac{CSI} error distribution is Gaussian. Fig. \ref{plot_absorption} shows that higher values of $\lambda_{\textrm{V}}$ leads to sacrificing the \ac{V2I} \ac{QoS} to compensate the \ac{V2V} \ac{QoS}. For instance, increasing $\lambda_{\textrm{V}}$ from $0.3$ to $0.5$ results in a $5$~Mbps  degradation in \ac{V2V} throughput, but improves the probability of satisfying the delay requirement by $14 \%$. The reason is that, under a higher $\lambda_{\textrm{V}}$, the \ac{HR} constraints in \eqref{c2-2} require the \ac{C-V2X} system to allocate more resources to \ac{V2V} links to ensure less degradation in delay. Therefore, the \ac{RSU} can modify $\lambda_{\textrm{V}}$ according to the different priority and criticality over the \ac{V2V} and \ac{V2I} links. In Fig. \ref{plot_dalay_a}, we can further observe that the probability of meeting the delay requirement on \ac{V2V} links, in case $\lambda_{\textrm{V}} = 0.5$, is only  $70 \%$, which not only falls significantly short of the required threshold $P_0 = 95 \%$, but is also $1 \%$ lower than that achieved by the Gaussian error model benchmark. However, this performance gap reflects a deliberate tradeoff inherent in resilience design. Specifically, the absorption prioritizes accurate \ac{PDF} estimation over immediate \ac{QoS} satisfaction, thereby laying the foundation for more effective \ac{QoS} recovery in the subsequent adaptation.

Next, we evaluate the \ac{QoS} of vehicular links during the adaptation phase. We focus on the case $\lambda_{\textrm{V}} = 0.5$. For comparison, we consider an alternative benchmark approach based on \ac{HPR} proposed in \cite{9382930}. The \ac{HPR}-based design assumes a Gaussian error distribution in the absorption phase for resource allocation and constructs a region that encompasses $P_0$ of the collected imperfect \ac{CSI} samples. The \ac{HPR} is then used to design a transmit power allocation scheme during adaptation for satisfying the delay requirement with a probability of at least $P_0$. As illustrated in Fig. \ref{delay_adaptation}, both the proposed design and the \ac{HPR}-based approach outperform the Gaussian error model on the delay of \ac{V2V} links during the adaptation phase. This is because both designs exploit the statistical characteristics of the error distribution during the absorption phase. Moreover, the \ac{HPR}-based design achieves a probability of $P_0 = 93 \%$ in meeting the delay requirement and the proposed design attains a slightly lower (by about $1 \%$) probability. However, the proposed design performs better in mitigating severe delay degradation, as shown in the \ac{CCDF} of Fig. \ref{ccdf_delay_adaptation}. In particular, for cases in which the delay exceeds $\tau_0 = 15$~ms, the proposed design ensures a $92 \%$ probability that the delay remains below $40$~ms, representing a $22 \%$ and $30 \%$ improvement over the Gaussian error model and the \ac{HPR}-based design, respectively. Such advantage stems from the proposed design’s utilization of the full \ac{PDF} of the error distribution, which provides richer statistical information compared to the \ac{HPR}. These results highlight the superior resilience of the proposed design in mitigating severe delay degradation after absorption. 

We further analyze the instantaneous probability of satisfying delay \ac{QoS} across \num{1000} time slots on all \ac{V2V} links. The benefits of full \ac{PDF} estimation in our proposed design are evident in Fig. \ref{delay_prob_adaptation}, where the proposed design consistently ensures that the actual probability does not deviate significantly from the requirement $P_0$. Additionally, the proposed design achieves significantly higher throughput on \ac{V2I} links compared to the two benchmarks, as shown in Fig. \ref{throughput_adaptation}. For instance, at a cumulative probability of $0.6$, the proposed design yields approximately $16 \%$ higher throughput than the two benchmarks. In addition, the proposed design achieves an average throughput gain of $14 \%$ and $16 \%$ over the Gaussian error model and the \ac{HPR}-based design. This is attributed to the ability of the full \ac{PDF} to facilitate a more flexible and precise transmit power allocation strategy. In contrast, the \ac{RSU} in the \ac{HPR}-based design must adopt a conservative worst-case probability approach, while the Gaussian error model is affected by inaccuracies resulted from an incorrect assumption about the \ac{CSI} error distribution.
\begin{figure}[t]
\vspace{-6pt}
\captionsetup[subfigure]{font=small}  
    \centering
    \subfloat[]{
        \includegraphics[width=0.24\textwidth]{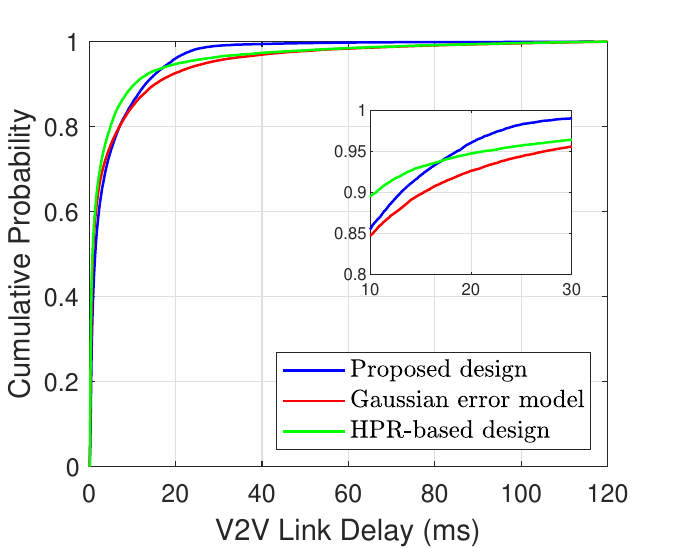}
        \label{delay_adaptation}
    }
    \subfloat[]{
        \includegraphics[width=0.24\textwidth]{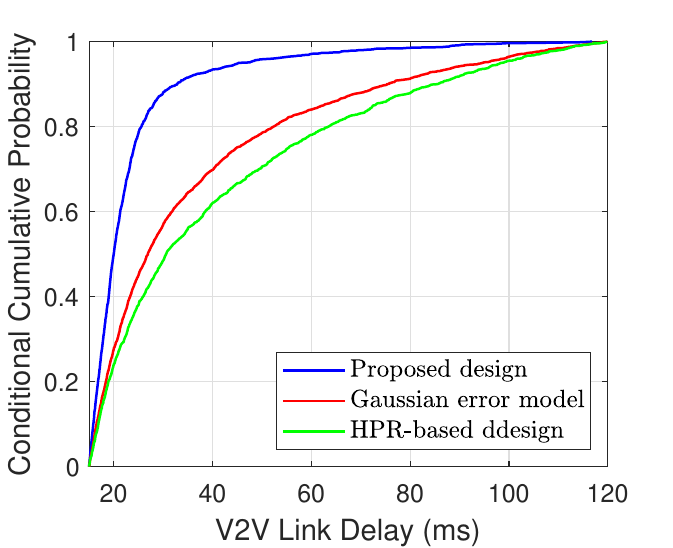}
        \label{ccdf_delay_adaptation}
    }
    \\ 
    \vspace{-11pt}
    \subfloat[]{
        \includegraphics[width=0.24\textwidth]{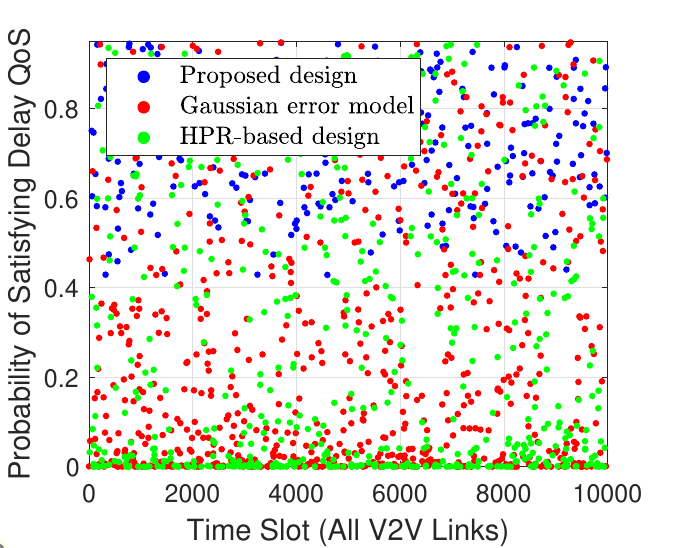}
        \label{delay_prob_adaptation}
    }
    \subfloat[]{
        \includegraphics[width=0.24\textwidth]{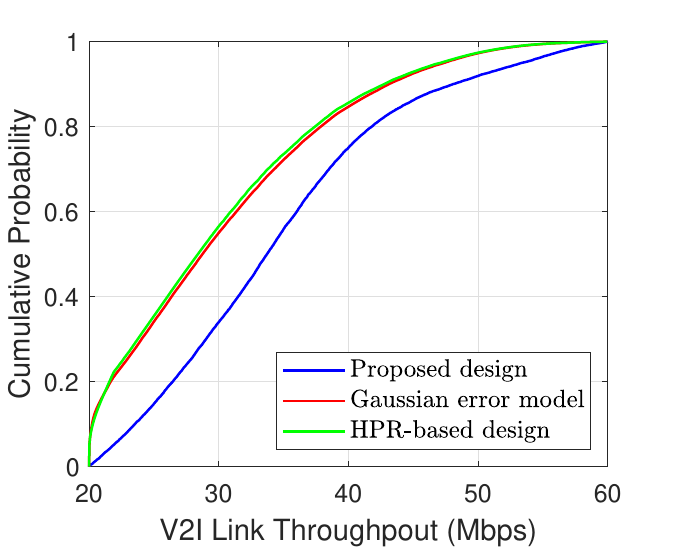}
        \label{throughput_adaptation}
    }
        \vspace{-4pt}
    \caption{\ac{QoS} on vehicular links in the adaptation phase with different design: a) \ac{CDF} of delay on \ac{V2V} links, b) \ac{CCDF} of delay on \ac{V2V} links, c) The true probability of satisfying delay requirement on \ac{V2V} links, d) \ac{CDF} of throughput on \ac{V2I} links.}
    \label{fig:example}
        \vspace{-20pt}
\end{figure}

Fig. \ref{delay_two_phases} shows the delay of \ac{V2V} links across both the absorption and adaptation phases. Specifically, we focus on the \ac{V2V} link exhibiting the worst adaptation capability, i.e., the \ac{V2V} link within the worst match given in Fig. \ref{pdf_1}. An absorption phase consisting of $T = \num{1000}$ time slots is considered, followed by a adaptation phase of \num{200} time slots. To analyze system performance, we sample \num{200} time slots from the absorption phase at equal intervals and include all \num{200} time slots from the adaptation phase, resulting in a total of \num{400} time slots. In this context, the transition from absorption to adaptation occurs at the \num{200}-th time slot. In the absorption phase, the proposed design limits the peak delay to $120$~ms, while the Gaussian error model and \ac{HPR}-based design exhibit peaks exceeding $140$~ms. This performance gain is attributed to the incorporation of the \ac{HR} metric into the optimization process, which mitigates transient delay spikes under imperfect \ac{CSI} disruption. Furthermore, during the subsequent adaptation phase, the proposed design maintains a lower and more stable delay on the \ac{V2V} link due to the accurate estimated \ac{PDF} of the \ac{CSI} error distribution. Specifically, when considering only delay instances exceeding $\tau_0 = 15$~ms, it achieves a conditional mean delay of $19.5$~ms, which is $35 \%$ and $56 \%$ lower than that achieved the Gaussian error model and \ac{HPR}-based design, respectively.
\begin{figure*}[t]
    \vspace{-15pt}
\captionsetup[subfigure]{font=small}  
    \centering
    \subfloat[]{
        \includegraphics[width=0.28\textwidth]{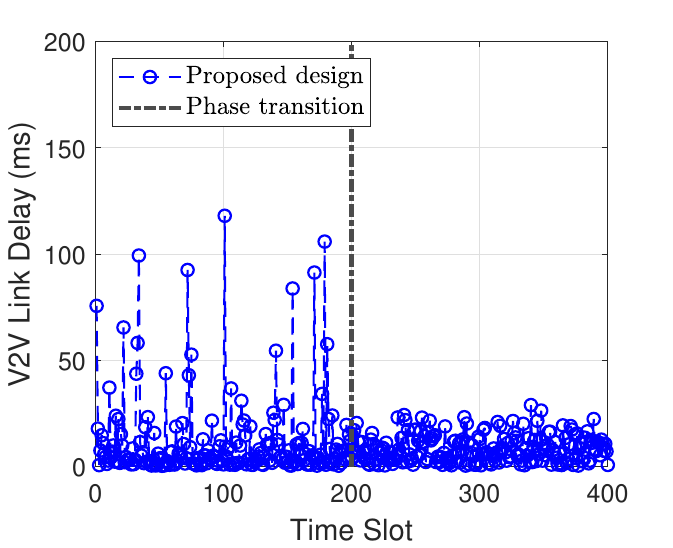}
        \label{delay_two_phases_1}
    }
    \subfloat[]{
        \includegraphics[width=0.28\textwidth]{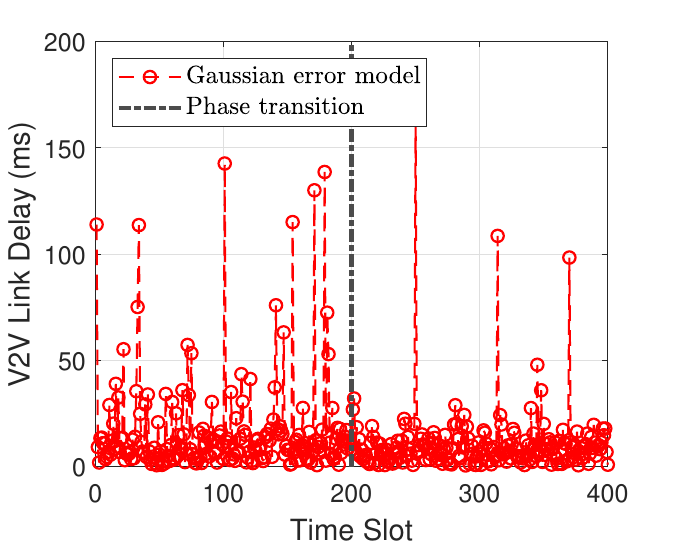}
        \label{delay_two_phases_2}
    }
    \subfloat[]{
        \includegraphics[width=0.28\textwidth]{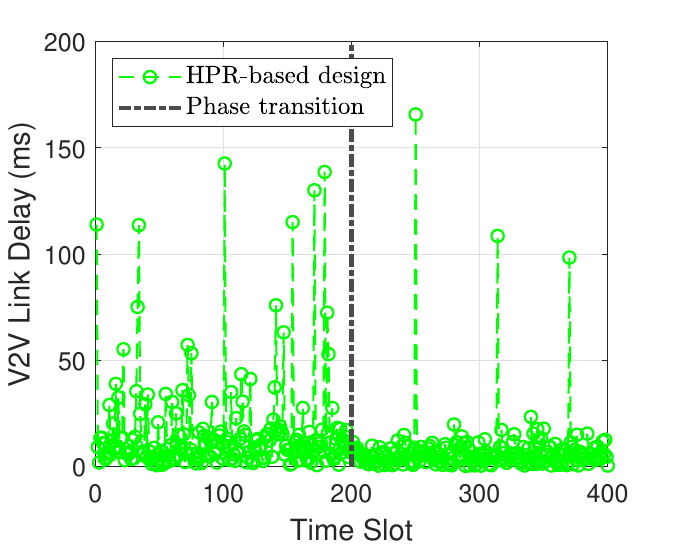}
        \label{delay_two_phases_3}
    }
    \vspace{-3pt}
    \caption{Delay of the \ac{V2V} link with the worst adaptation capability in two phases.}
    \vspace{-18pt}
        \label{delay_two_phases}
\end{figure*}

\vspace{-6pt}
\section{Conclusion}
\vspace{-4pt}
In this paper, we have proposed a novel two-phase framework that instills resilience into \ac{C-V2X} network under imperfect \ac{CSI}. Specifically, we have formulated a bi-level optimization problem aiming at satisfying the \ac{QoS} requirements on vehicular links without any prior assumptions on the imperfect \ac{CSI}. Then, we have decoupled the complex bi-level optimization problem into two sub-problems that are solved sequentially in the \emph{absorption phase} and \emph{adaptation phase}. For the absorption phase, we have defined the \ac{MSE} of imperfect \ac{CSI}'s \ac{PDF} as the adaptation capability and derived an explicit upper bound on the \ac{MSE}. Based on the analytic expression of the adaptation capability, the matching and absorption power scheme have been optimized. Due to the tradeoff between the adaptation capability and the \ac{C-V2X}'s \ac{QoS} during absorption, we have incorporated a novel metric named \ac{HR} to evaluate the \ac{C-V2X}'s absorption performance. After the absorption phase, we have further analyzed the impact of the absorption phase on the adaptation phase and optimized the real-time transmit power scheme based on the estimated \ac{PDF}. The simulation results demonstrate the superior capability of the proposed framework in sustaining the \ac{QoS} of the \ac{C-V2X} network under imperfect \ac{CSI}, across both two phases. Specifically, the proposed design reduces the conditional \ac{V2V} delay, delay values exceeding the desired requirement, by $35 \%$ and $56 \%$ and improves the average \ac{V2I} throughput by $14 \%$ and $16 \%$ over the benchmarks during adaptation, without compromising the \ac{C-V2X}'s \ac{QoS} in absorption.
\appendices
\vspace{-12pt}
\section{Proof of Lemma 1}
\label{Proof of Lemma 1}
\begin{proof}
According to the definition in \eqref{hr v2v}, we have
\begin{equation}
    \begin{aligned}
        \Lambda_m =  \lim_{\Delta \tau \rightarrow 0} \frac{\mathbb{P}\left\{ \gamma^{\textrm{V}}_m \geq 2^{\frac{D}{B(\tau_0+\Delta\tau)}} - 1 \right\} - \mathbb{P}\left\{ \gamma^{\textrm{V}}_m \geq 2^{\frac{D}{B\tau_0}} - 1 \right\}}{\Delta \tau \left( 1 - \mathbb{P}\left\{ \gamma^{\textrm{V}}_m \geq 2^{\frac{D}{B\tau_0}} - 1 \right\} \right)}.
    \end{aligned}
\end{equation}
We then focus on the expression of the following probability:
\begin{equation}
\begin{aligned}
    \mathbb{P}\left\{ \gamma^{\textrm{V}}_m \geq 2^{\frac{D}{B\tau_0}} - 1 \right\} 
    = \frac{\exp{\left[-\frac{\sigma_n^2}{p^{\textrm{V}}_{m, \text{a}} L^{\textrm{V}}_{m,\text{a}}}(2^{\frac{D}{B\tau_0}} - 1)\right]}}{1 + \frac{p^{\textrm{I}}_{n,\text{a}} L^{\textrm{I}}_{n,\text{a}}}{p^{\textrm{V}}_{m,\text{a}} L^{\textrm{V}}_{m,\text{a}}}(2^{\frac{D}{B\tau_0}} - 1)}.
\end{aligned}
\end{equation}
Similarly, we have:
\begin{equation}
\small
    \mathbb{P}\left\{ \gamma^{\textrm{V}}_m \geq 2^{\frac{D}{B(\tau_0+\Delta\tau)}} - 1 \right\} = \frac{\exp{\left[-\frac{\sigma_n^2}{p^{\textrm{V}}_{m,\text{a}} L^{\textrm{V}}_{m,\text{a}}}(2^{\frac{D}{B(\tau_0+\Delta \tau)}} - 1)\right]}}{1 + \frac{p^{\textrm{I}}_{n,\text{a}} L^{\textrm{I}}_{n,\text{a}}}{p^{\textrm{V}}_{m,\text{a}} L^{\textrm{V}}_{m,\text{a}}}(2^{\frac{D}{B(\tau_0+\Delta \tau)}} - 1)}.
\end{equation}
Then, we can rewrite the \eqref{hr v2v} as following
\begin{equation}
    \Lambda_m = \lim_{\Delta \tau \rightarrow 0} \frac{k(\Delta \tau) - A}{\Delta \tau(1 - A)},
\end{equation}
where
\begin{equation}
    k(\Delta \tau) = \frac{\exp{\left[-\frac{\sigma_n^2}{p^{\textrm{V}}_{m, \text{a}} L^{\textrm{V}}_{m,\text{a}}}(2^{\frac{D}{B(\tau_0+\Delta \tau)}} - 1)\right]}}{1 + \frac{p^{\textrm{I}}_{n,\text{a}} L^{\textrm{I}}_{n,\text{a}}}{p^{\textrm{V}}_{m,\text{a}} L^{\textrm{V}}_{m,\text{a}}}(2^{\frac{D}{B(\tau_0+\Delta \tau)}} - 1)}, 
\end{equation}
\vspace{-5pt}
\begin{equation}
    A =  \frac{\exp{\left[-\frac{\sigma_n^2}{p^{\textrm{V}}_{m,\text{a}} L^{\textrm{V}}_{m,\text{a}}}(2^{\frac{D}{B\tau_0}} - 1)\right]}}{1 + \frac{p^{\textrm{I}}_{n, \text{a}} L^{\textrm{I}}_{n, \text{a}}}{p^{\textrm{V}}_{m,\text{a}} L^{\textrm{V}}_{m,\text{a}}}(2^{\frac{D}{B\tau_0}} - 1)}.
\end{equation}
Moreover, we can observe that $\lim_{\Delta \tau \rightarrow 0} k(\Delta \tau) = A$. Thus, we can derive $\Lambda_m$ by L'Hôpital's rule, which is given by:
\begin{equation}
\begin{aligned}
    \Lambda_m
    & = D_{\textrm{V}} e^{-\frac{\sigma^2\gamma_{\textrm{V}}}{p^{\textrm{V}}_{m,\text{a}}L^{\textrm{V}}_{m,\text{a}}}} \frac{ \frac{p^{\textrm{I}}_{n,\text{a}}L^{\textrm{I}}_{nm, \text{a}}}{p^{\textrm{V}}_{m,\text{a}}L^{\textrm{V}}_{m,\text{a}}} + \frac{\sigma^2}{p^{\textrm{V}}_{m, \text{a}} L^{\textrm{V}}_{m,\text{a}}} \left( 1 + \frac{p^{\textrm{I}}_{n,\text{a}}L^{\textrm{I}}_{nm,\text{a}}}{p^{\textrm{V}}_{m,\text{a}}L^{\textrm{V}}_{m,\text{a}}} \gamma_{\textrm{V}} \right) }{ \left( 1 + \frac{p^{\textrm{I}}_{n,\text{a}}L^{\textrm{I}}_{nm,\text{a}}}{p^{\textrm{V}}_{m,\text{a}}L^{\textrm{V}}_{m,\text{a}}} \gamma_{\textrm{V}} - e^{-\frac{\sigma^2\gamma_{\textrm{V}}}{p^{\textrm{V}}_{m,\text{a}}L^{\textrm{V}}_{m,\text{a}}}} \right)^2 },
\end{aligned}
\end{equation} 
where $D_{\textrm{V}} = \frac{\ln{2}D2^{\frac{D}{B\tau_0}}}{B\tau_0^2} $.
\end{proof}
\vspace{-12pt}
\section{Proof of Theorem 2}
\label{Proof of Theorem 2}
\begin{proof}
First, we define $\theta_t = \frac{1}{2\pi} \int_{-\infty}^{\infty} F\left\{ \Phi_t \right\} F^{*}\left\{ f_Z \right\} (1-\frac{jw}{\lambda_Y}) dw$ and $\hat{\theta}_t = \frac{1}{2\pi T}\sum_{k=1}^T \int_{-K \pi}^{K \pi} F\left\{ \Phi_t \right\} e^{jwz_k}(1-\frac{jw}{\lambda_Y}) dw$. Then we have $\mathbb{E}\left[ \left(\hat{P}_{m,t}^{(nm)} - P_{m,t}^{(nm)} \right)^2 \right] = \mathbb{E} \left[ \left( \theta_t - \hat{\theta}_t \right)^2\right] = \left( \theta_t - \mathbb{E} \left[ \hat{\theta}_t \right] \right)^2 + \text{Var} \left[ \hat{\theta}_t \right]$. Therefore, we can derive
\begin{equation}
\small
\begin{aligned}
      \theta_t - \mathbb{E} \left[ \hat{\theta}_t \right] 
    = &  \frac{1}{2\pi} \int_{-\infty}^{\infty} F\left\{ \Phi_t \right\} F^{*}\left\{ f_Z \right\} (1-\frac{jw}{\lambda_Y}) dw \\
    & - \frac{1}{2\pi T}\sum_{k=1}^T \int_{-K \pi}^{K \pi} F\left\{ \Phi \right\} \mathbb{E} \left[ e^{jwz_k} \right] (1-\frac{jw}{\lambda_Y}) dw \\
    = & \frac{1}{2\pi} \int_{-\infty}^{\infty} F\left\{ \Phi_t \right\} F^{*}\left\{ f_Z \right\} (1-\frac{jw}{\lambda_Y}) dw \\
    & - \frac{1}{2\pi} \int_{-K \pi}^{K \pi} F\left\{ \Phi_t \right\} F^{*}\left\{ f_Z \right\} (1-\frac{jw}{\lambda_Y}) dw  \\
    = & \frac{1}{2\pi} \int_{w\geq |K\pi|} F\left\{ \Phi_t \right\} F^{*}\left\{ f_E \right\} dw.
\end{aligned}
\end{equation}
Moreover, we can obtain \eqref{MSE SUM} based on \eqref{inequality_V_E}.
\begin{figure*}[b]
\small
    \vspace{-12pt}
    \begin{equation}
\label{inequality_V_E}
\begin{aligned}
        \text{Var} \left[ \hat{\theta}_t \right] \leq \mathbb{E} \left[ \hat{\theta}_t^2 \right]  \leq \frac{1}{4\pi^2T} E\left\{ \left( \int_{- K\pi}^{K\pi} \left\| F\left\{ \Phi_t \right\} \right\|  \left\| 1-\frac{jw}{\lambda_Y} \right\| \left\|e^{jwZ}\right\| dw \right)^2 \right\}.
\end{aligned}
\end{equation}
    \vspace{-8pt}
    \begin{equation}
    \label{MSE SUM}
    \begin{aligned}
 \mathbb{E}\left[ \left(\hat{P}_{m,t}^{(nm)} - P_{m,t}^{(nm)} \right)^2 \right] \leq \left( \frac{1}{2\pi} \int_{w\geq |K\pi|} F\left\{ \Phi_t \right\} F^{*}\left\{ f_E \right\} dw \right)^2 + \frac{1}{4\pi^2T} E\left\{ \left( \int_{- K\pi}^{K\pi} \left\| F\left\{ \Phi_t \right\} \right\|  \left\| 1-\frac{jw}{\lambda_Y} \right\| \left\|e^{jwZ}\right\| dw \right)^2 \right\}.
    \end{aligned}
    \end{equation}
    \vspace{-20pt}
\end{figure*}
To derive an analytic expression of the second term on the right side of \eqref{MSE SUM}, we have
\begin{equation}
\small
\begin{aligned}
    \left\| F\left\{ \Phi_t \right\} \right\| 
    & = \left\| \frac{e^{-jwK_1} - 1}{-jw} + \frac{e^{-(c_t+jw)K_1 }- 1}{c_t+jw} \right\| \\
    & \overset{(a)}{\approx} \left\| \frac{e^{-jwK_1} - 1}{-jw} + \frac{- 1}{c_t+jw} \right\| \\
    & = \left\|  \frac{e^{-jwK_1}}{-jw} + \frac{c_t}{jw(c_T+jw)}  \right\| \\
    & \overset{(b)}{\leq}  \left\|  \frac{1}{jw} \right\|  + \left\| \frac{c_t}{jw(c_t+jw)}  \right\|, 
\end{aligned}
\end{equation}
where the approximation (a) is based on the fact that $K_1$ is a large constant such that $e^{-(c_t+jw)K_1} \approx 0$ and (b) is based on triangle inequality. Therefore, we can obtain
\begin{equation}
\small
\label{47}
\begin{aligned}
    & E\left\{ \left( \int_{- K\pi}^{K\pi} \left\| F\left\{ \Phi_t \right\} \right\|  \left\| 1-\frac{jw}{\lambda_Y} \right\| \left\|e^{jwZ}\right\| dw \right)^2 \right\} \\
    \leq & \left[ \int_{- K\pi}^{K\pi} \left( \left\|  \frac{1}{jw} \right\|  + \left\| \frac{c_t}{jw(c_t+jw)}  \right\|  \right) \sqrt{1 + (\lambda_{Y}^{-1}w)^2 } dw \right]^2 \\
    = & 4 \left[ \int_{0}^{K\pi} \left( \frac{1}{w} + \frac{c_t}{w\sqrt{c_t^2 + w^2}} \right) \sqrt{1 + (\lambda_{Y}^{-1}w)^2 } dw \right]^2. 
\end{aligned}
\end{equation}
Finally, we can obtain an upper bound of \eqref{47}:
\begin{equation}
\small
\label{MSE integral}
    \begin{aligned}
        & \int_{0}^{K\pi} \left( \frac{\sqrt{1 + (\lambda_{Y}^{-1}w)^2 }}{w} + \frac{c_t\sqrt{1 + (\lambda_{Y}^{-1}w)^2 }}{w\sqrt{c_t^2 + w^2}} \right) dw \\
        \leq & \int_{0}^{K\pi} \left( \frac{\sqrt{1 + (\lambda_{Y}^{-1}w)^2 }}{w} + \frac{c_t\sqrt{(1 + \lambda_{Y}^{-1}w)^2 }}{w\sqrt{c_t^2 + w^2}} \right) dw \\
        = & \sqrt{1+\lambda_{Y}^{-2}K^2\pi^2} + \ln{\left(\frac{\sqrt{1+\lambda_{Y}^{-2}K^2\pi^2} - 1}{\lambda_{Y}^{-1}K\pi}\right)} - 1 \\
        & + c_t\lambda_{Y}^{-1} \ln{ \left(\frac{ K\pi + \sqrt{c_t^2 + K^2\pi^2}}{c_t}\right)} \\
        & + \frac{1}{c_t} \ln{\left( \frac{c_t K\pi}{\sqrt{K^2\pi^2 + c_t^2} + K\pi} \right)}.
    \end{aligned}
\end{equation}
By substituting \eqref{MSE integral} into \eqref{MSE SUM}, we complete the proof.
\end{proof}
\section{Proof of Proposition 1}
\label{Proof of Proposition 1}
\begin{proof}
The monotonicity of $u(c_t)$ is determined by last two terms in \eqref{u}, which can be defined and rewritten as
\begin{equation}
\label{hat u}
\begin{aligned}
\hat{u}(c_t) 
\triangleq & \frac{c_t}{\lambda_{Y}} \ln{\left( K_2\pi + \sqrt{c_t^2 + K_2^2\pi^2} \right)} - \frac{c_t}{\lambda_{Y}} \ln{c_t} + \frac{1}{c_t} \ln{(K_2\pi)} \\ 
& + \frac{1}{c_t}\ln{c_t} - \frac{1}{c_t} \ln{\left( K_2\pi \right)} - \frac{1}{c_t} \ln{\left( \sqrt{1 + (\frac{c_t}{K_2\pi})^2} + 1 \right)} \\
= & \frac{c_t}{\lambda_{Y}} \ln{\left( K_2\pi + \sqrt{c_t^2 + K_2^2\pi^2} \right)} + \frac{1}{c_t}\ln{c_t} - \frac{c_t}{\lambda_{Y}} \ln{c_t} \\
& - \frac{1}{c_t} \ln{\left( \sqrt{1 + (\frac{c_t}{K_2\pi})^2} + 1 \right)} \\
= & \ln{\left( c_t^{-1} + \sqrt{\left(c_t\right)^{-2} + \left(K_2\pi\right)^{-2}} \right)} \left(\frac{c_t}{\lambda_{Y}} - c_t^{-1}\right) \\
& + \frac{c_t}{\lambda_{Y}} \ln{(K_2\pi)} \\
\overset{(a)}{=} & -\frac{1}{x \lambda_{Y}} \ln{D} + \ln{\left( x + \sqrt{x^2 +D^{2}} \right)} \left(\frac{1}{x\lambda_{Y}} - x\right),
\end{aligned}
\end{equation}
where we define $D \triangleq (K_2\pi)^{-1}$ and $x \triangleq c_t^{-1}$ in (a). The first derivative of $\hat{u}(c_t) = \hat{u}(x)$ with respect to $x$ is
\begin{equation}
\label{derivative}
\begin{aligned}
&\frac{\mathrm{d}\hat{u}(x)}{\mathrm{d}x} \\
= & \frac{ x + \ln{D}\sqrt{x^2 + D^{2}} - \sqrt{x^2 + D^{2}} \ln{\left( x + \sqrt{x^2 + D^{2}} \right)}}{ \lambda_Y x^2 \sqrt{x^2 + D^{2}}} \\
& - \frac{ \lambda_Y x^2 \sqrt{x^2 + D^{2}} \ln{\left( x + \sqrt{x^2 + D^{2}} \right)}}{ \lambda_Y x^2 \sqrt{x^2 + D^{2}}} - \frac{x}{\sqrt{x^2 + D^2}}.
\end{aligned} 
\end{equation}
According to \eqref{derivative}, one sufficient condition for $\frac{\mathrm{d}\hat{u}(x)}{\mathrm{d}x} \leq 0$ is
\begin{equation}
\begin{aligned}
    \sqrt{x^2 + D^{2}} \Biggl[ & \ln{D} - (1+\lambda_Yx^2) \ln{\left( x + \sqrt{x^2 + D^{2}} \right)} \\
& + \frac{x}{\sqrt{x^2 + D^{2}}} \Biggr] \leq 0,
\end{aligned}
\end{equation}
which is equivalent to 
\begin{equation}
\label{con}
\begin{aligned}
    & \frac{x}{\sqrt{x^2 + D^{2}}} - \ln{\left(\frac{x + \sqrt{x^2 + D^{2}}}{D}\right)} \\
    &- \lambda_Yx^2 \ln{\left( x + \sqrt{x^2 + D^{2}} \right)} \leq 0.
\end{aligned}
\end{equation}
To satisfy \eqref{con}, we observe that as long as $K_2$ is large enough, i.e., $D$ is close to $0$, the term $-\ln{\left(\frac{x + \sqrt{x^2 + D^{2}}}{D}\right)} < 0$ is dominant in \eqref{con}. Thus, \eqref{con} can be always satisfied and $u(c_t)$ can be viewed as an increasing function of $c_t$.
\end{proof}
\bibliographystyle{IEEEtran}
\bibliography{bibliography}
\end{document}